\documentclass[a4paper]{article}
\pdfoutput=1

\usepackage{amsmath}
\usepackage{amsthm}
\usepackage{amssymb}
\usepackage{bm}
\usepackage{amsfonts}
\usepackage{mathrsfs}
\usepackage{mathtools}
\usepackage{color}
\usepackage[dvipsnames]{xcolor}

\usepackage[boxsize=0.3em,aligntableaux=center]{ytableau}

\newtheorem{thm}{Theorem}
\newtheorem{dfn}[thm]{Definition}
\newtheorem{prp}[thm]{Proposition}
\newtheorem{lem}[thm]{Lemma}

\theoremstyle{remark}
\newtheorem{rem}{Remark}

\usepackage{tikz-cd}
\usetikzlibrary{matrix,arrows,calc}

\tikzset{ampersand replacement=\&}

\usepackage[colorlinks=true,linkcolor=blue,citecolor=red,urlcolor=Violet,bookmarksdepth=subsection,final]{hyperref}

  \newcommand{\del}{\partial}
  \newcommand{\oo}{\infty}

  \newcommand{\id}{\mathrm{id}}
  
  \newcommand{\Secs}{\Gamma}
  \newcommand{\oast}{\circledast} 

  \newcommand{\ga}{\gamma}
  \newcommand{\La}{\Lambda}
  \newcommand{\la}{\lambda}
  
  \newcommand{\D}{\mathbb{D}}
  \newcommand{\KIDi}{\operatorname{KID}^{(1)}}
  \newcommand{\KIDii}{\operatorname{KID}^{(2)}}
  \newcommand{\uKIDo}{\operatorname{uKID}^{(0)}}
  \newcommand{\uKIDi}{\operatorname{uKID}^{(1)}}
  \newcommand{\uKIDii}{\operatorname{uKID}^{(2)}}

\renewcommand{\d}{\mathrm{d}}

\allowdisplaybreaks[1] 

\title{Unit Killing Initial Data}
\author{%
	Igor Khavkine$^\flat$%
		\thanks{E-mail: khavkine@math.cas.cz}
		{} and
	Salvador Mengual Sendra$^{\sharp}$%
		\thanks{E-mail: salvador.mengual@uv.es}
\\[2ex]
	{\small $^\flat$Institute of Mathematics of the Czech Academy of Sciences,}\\
	{\small \v{Z}itn{\'a} 25, 110 00 Praha 1, Czech Republic}\\
	{\small $^\sharp$Astronomy and Astrophysics Department, Universitat de Val{\`e}ncia, }\\
	{\small  Av. Vicent Andr{\'e}s Estell{\'e}s, 19, 46100, Burjassot, Spain}
}
\date{}

\begin{document}
\maketitle

\begin{abstract}
We find a system of differential equations on an initial data surface
whose solutions are in bijection with unit vector fields on the Einstein
$\Lambda$-vacuum development that are proportional to a Killing vector.
We refer to these conditions as the \textit{unit Killing initial data}
(uKID) equations, analogous to the classical \textit{Killing initial
data} (KID) equations. The uKID equations can be useful in a setting
where only the unit-normalized part of the Killing vector is
geometrically distinguished. We eliminate the scaling degree of freedom
of a general Killing vector to obtain the space-time equations
characterizing unit normalized Killing vector fields and also the uKID
equations. These equations are also prolonged to canonical connection
form, showing their finite type character. Finally, we obtain an
independent derivation of the uKID equations by revisiting the propagation
identity method, which has previously been used to characterize the
initial data of other geometric equations.
\end{abstract}

\section{Introduction}

The Killing equation characterizes the infinitesimal isometries of a
pseudo-Riemannian manifold, its Killing vector fields. A related notion
of \emph{Killing initial data (KID)} equations gained prominence in
mathematical relativity after being so named in~\cite{beig-chrusciel}.
They consist of a set of partial differential equations (PDEs) intrinsic
to a spacelike hypersurface $\Sigma \subset M$ in a Einstein vacuum
Lorentzian manifold $(M,g_{ab})$, whose special property is that their
solutions are in bijection (when $\Sigma$ is partial Cauchy) with the
Killing vector fields in the domain of dependence $D(\Sigma)$. In
particular, if $\Sigma$ is a Cauchy surface for $(M,g_{ab})$, so that $M =
D(\Sigma)$, solving the KID equations on $\Sigma$ allows one to check
whether the metric initial data will evolve into a spacetime with
Killing vectors, without first solving the Einstein equations. Moreover,
due to their geometric character, the KID equations can be formulated on
an arbitrary choice of Cauchy surface and coordinates thereon.

The existence of what is now known as the KID equations was first
noticed much earlier~\cite{berezdivin, MONCRIEF-KID1, MONCRIEF-KID2,
COLL77} and has since then inspired a number of
generalizations and applications. On the one hand, they have been extended to covariant gravity-matter systems, where KID-type equations characterize the existence of space-time symmetries from initial data and guarantee that these symmetries are preserved by the hyperbolic evolution of the coupled field equations~\cite{Racz-1999, Racz-2001}. On the other hand, they have become a fundamental tool in unique continuation problems and black-hole rigidity, allowing local Killing fields arising from characteristic initial data to be propagated into the space-time development~\cite{Ionescu-Klainerman}.

One application of the KID equations is to a geometric characterization
of initial data of the rotating Kerr black hole intrinsic to an
arbitrary Cauchy surface~\cite{gp-kerrid}. That characterization
combines a so-called \emph{IDEAL characterization} of the Kerr
geometry~\cite{IDEAL-kerr} with the KID equations and a way of
geometrically expressing Kerr's stationary Killing vector directly from
the curvature. It would be interesting to extend similar intrinsic
initial data characterizations to other physically interesting solutions
of Einstein equations with fluid matter~\cite{Volkoff, Wyman}. While these
solutions do possess static Killing vectors, they can only be
obtained from the curvature in unit normalized form. Already back in the
well-known reference~\cite[\textsection 2.2(vi)]{ehlers-mech}, Ehlers
noticed that a timelike unit vector that is a rescaling of a Killing
vector can be identified by a related set of equations and referred to
it as \emph{isometric}.%
	\footnote{Strictly speaking, Ehlers referred to the flow of such a
	vector field as an \emph{isometric flow}. We might as well refer to
	the vector field itself as isometric.} %
For reference, let us call these the \emph{unit
Killing equations}. A natural question is whether the initial data
characterization strategy from~\cite{gp-kerrid} can be generalized to
the situation where only a unit normalized Killing vector is available.
Part of a positive answer would be identifying the unit Killing initial
data (uKID) equations.

In this work, we precisely derive the uKID equations both independently
and following the \emph{propagation identity} framework outlined
in~\cite{khgp-ckid, khgp-cykid}. The former is presented in
Section~\ref{sec:killing}. In Section~\ref{sec:ukilling}, we start from
the Killing equation and we obtain the analogue system of equations
characterizing the unit part of a time-like and space-like Killing
vector fields. Then, in Section~\ref{sec:uKID} we introduce a space-time
split around a Cauchy surface and obtain the initial data conditions for
unit Killing vector fields. In Section~\ref{sec:prop}, we obtain the
propagation identities ensuring that the condition of unit Killing
vector field is maintained along the evolution. Besides providing an
another derivation of the uKID equations, these propagation identities
may be independently useful in stability analyses
like~\cite{Ionescu-Klainerman}. First, in Section~\ref{sec:propeq} we
develop the theoretical framework on which the result obtained in
Section~\ref{sec:ukilling-propeq} is based, with technical auxiliary
results on Agmon-Douglis-Nirenberg hyperbolicity relegated to
Appendix~\ref{app:adn}. We also obtain, for PDE systems that we derive
from the Killing equation, a prolonged connection form, justifying that
they inherit the property of being of \emph{finite type}. Finally, in
Section~\ref{sec:discussion} we discuss our results and outline
directions for future research.

\section{Unit Killing vectors and initial data} \label{sec:killing}

In this section, we obtain the unit Killing equations, both in their
spacetime and initial data versions. To make our conventions precise,
we make the blanket assumption that all manifolds, bundles and maps are
infinitely smooth. For our conventions on Riemann curvature we
follow~\cite{XACT}, which is the same as~\cite{wald-gr}.

\subsection{Unit Killing equations} \label{sec:ukilling}

Consider the Killing equation on a Lorentzian manifold $(M,g_{ab})$:
\begin{equation} \label{eq:killing}
	K_{ab}[v] := \nabla_a v_b + \nabla_b v_a = 0.
\end{equation}
Suppose now that $v_b$ is timelike and set $v_b := \la u_b$, where $u_b
u^b = -1$, hence $v_b v^b = \la^2$. Define the projections orthogonal
to $u_a$ as $\ga_{ab} = g_{ab} + u_a u_b$ and the covariant derivative
along $u_a$ as $\dot{(-)} = u^e \nabla_e (-)$.
Substituting in the Killing operator gives
\begin{equation} \label{eq:killing-lu}
	K_{ab}[\la u] = \la (K_{ab}[u] + 2 u_{(a} \nabla_{b)} \ln\la) ,
\end{equation}
with contractions
\begin{align}
	\la^{-1} u^a K_{ab}[\la u]
		&= u^a \nabla_a u_b + u^a \nabla_b u^a
			+ u_b u^a \nabla_a \ln\la - \nabla_b \ln\la
		\notag \\
		&= \dot{u}_b + \tfrac{1}{2} \nabla_b (u^a u_a)
			 - \nabla_b \ln\la + u_b \dot{\ln\la} ,
	\\
	\la^{-1} u^a u^b K_{ab}[\la u]
		&= \dot{u^a u_a} - 2 \dot{\ln\la} = -2 \dot{\ln\la} . 
\end{align}
Defining
\begin{align}
\label{eq:A}
	A_a[\la,u] &:= \dot{u}_a - \nabla_a \ln\la , \\
\label{eq:spK}
	\bar{K}_{ab}[u] &:= \gamma^c_a \gamma^d_b K_{cd}[u]
	= (\nabla_a + \dot{u}_a) u_b + (\nabla_b + \dot{u}_b) u_a , \\
\label{eq:B}
	B_{ab}[u] &:= (\d A[\la,u])_{ab} = 2\nabla_{[a} \dot{u}_{b]} ,
\end{align}
the Killing operator can now be rewritten
\begin{align} \label{eq:killing-Kb-A}
	K_{ab}[\la u]
	&= \la (\gamma^c_a \gamma^d_b K_{cd}[u]
		- 2 u_{(a} (\dot{u} - \nabla \ln\la)_{b)})
	\notag \\
	&= \la (\bar{K}_{ab}[u] - 2 u_{(a} A_{b)}[\la,u] ) .
\end{align}
This leads us to the following
\begin{prp}[Unit Killing equations] \label{prp:ukilling}
(a) Given a timelike unit vector field $u_a$ on $(M,g_{ab})$, the equations
\begin{equation} \label{eq:ukilling}
	\bar{K}_{ab}[u] = 0 \quad \text{and} \quad
	B_{ab}[u] = 0
\end{equation}
are satisfied iff there locally exists a scalar $\la$ such that $v_a
= \la u_a$ is a Killing vector, $K_{ab}[v] = 0$.

(b) When $u_a$ is a spacelike unit vector, the analogous equations are
\begin{equation} \label{eq:s-ukilling}
	\bar{K}^+_{ab}[u] := (\delta^c_a - u^c u_c) (\delta^d_b - u^d u_b) K_{cd}[u]
		= 0 \quad \text{and} \quad
	B_{ab}[u] = 0.
\end{equation}
\end{prp}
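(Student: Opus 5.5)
The argument rests on the decomposition \eqref{eq:killing-Kb-A}, namely $K_{ab}[\la u] = \la(\bar K_{ab}[u] - 2u_{(a}A_{b)}[\la,u])$, together with the fact that $\bar K_{ab}[u]$ is purely spatial ($u^a\bar K_{ab}=0$). For the forward direction, suppose $v_a=\la u_a$ is Killing with $\la\neq 0$. Since $\bar K_{ab}$ and $u_{(a}A_{b)}$ are linearly independent pieces, $K_{ab}[\la u]=0$ splits into two parts.
\begin{itemize}
\item Projecting with $\ga^c_a\ga^d_b$ kills the $u_{(a}A_{b)}$ term and gives $\bar K_{ab}[u]=0$.
\item Contracting with $u^a$ gives $\tfrac12 A_b - \tfrac12 u_b(u^aA_a)=0$. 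Contracting once more with $u^b$ gives $u^aA_a=0$, and hence $A_b=0$.
\end{itemize}
Note that $u^aA_a = u^a\dot u_a - \dot{\ln\la} = -\dot{\ln\la}$, using $u^a\dot u_a=0$. From $A=0$ we get $\dot u_a=\nabla_a\ln\la$, so $\dot u$ is closed, i.e.\ $B_{ab}[u]=0$. The sign of $\la$ is irrelevant; we may take $\ln|\la|$ locally.

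For the converse, assume $\bar K_{ab}[u]=0$ and $B_{ab}[u]=2\nabla_{[a}\dot u_{b]}=0$. By the Poincar\'e lemma, on any contractible neighbourhood there is a scalar $f$ with $\nabla_a f=\dot u_a$. Set $\la:=e^f>0$. Then $A_a[\la,u]=\dot u_a-\nabla_a f=0$, and both terms in \eqref{eq:killing-Kb-A} vanish, so $K_{ab}[\la u]=0$. The only subtlety here is locality: $\la$ is determined up to a constant multiple, which is exactly the scaling freedom of a Killing vector, and it exists only locally unless $H^1$ obstructions vanish. This is why the statement says ``locally''.

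For part (b), with $u_au^a=+1$, I would redo the computation with the projector $\ga^+_{ab}=g_{ab}-u_au_b$. Expanding as before,
\[
K_{ab}[\la u]=\la\bigl(K_{ab}[u]+2u_{(a}\nabla_{b)}\ln\la\bigr),
\]
and $K_{ab}[u]$ decomposes with respect to $u$. Its mixed component is
\[
u^aK_{ab}[u]=\dot u_b+\tfrac12\nabla_b(u\cdot u)=\dot u_b,
\]
and its $uu$ component vanishes. Hence
\[
K_{ab}[\la u]=\la\bigl(\bar K^+_{ab}[u]+2u_{(a}(\dot u-\nabla\ln\la)_{b)}\bigr)
\]
up to a correction $\propto u_au_b\,\dot{\ln\la}$, which is absorbed once $u^bA_b=-\dot{\ln\la}$ is tracked. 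From here the argument proceeds identically.

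The main (though minor) obstacle is carefully verifying the sign and the projection identity in the spacelike case, i.e.\ that $K_{ab}[\la u]$ is exactly $\la(\bar K^\pm_{ab}\mp 2u_{(a}A_{b)})$. That identity is what ensures the vanishing of $K$ is equivalent to the vanishing of both pieces separately. I would check it by contracting both sides with $u^au^b$, with $u^a\ga^{\pm b}_c$ and with $\ga^{\pm a}_c\ga^{\pm b}_d$.
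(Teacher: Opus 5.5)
Your part (a) is correct and follows the paper's proof. The decomposition \eqref{eq:killing-Kb-A} splits $K_{ab}[\la u]=0$ into $\bar K_{ab}[u]=0$ and $A_a[\la,u]=0$, and the Poincar\'e lemma applied to the closed $1$-form $\dot u_a$ produces $\ln\la$.

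Part (b), however, has a sign error in precisely the identity you singled out for checking, and the converse fails as written. Your own intermediate facts are $u^aK_{ab}[u]=\dot u_b$, $u^au^bK_{ab}[u]=0$ and $u_au^a=+1$. They give $K_{ab}[u]=\bar K^+_{ab}[u]+2u_{(a}\dot u_{b)}$, and hence, exactly,
\[
K_{ab}[\la u]=\la\bigl(\bar K^+_{ab}[u]+2u_{(a}A^+_{b)}[\la,u]\bigr), \qquad A^+_a[\la,u]:=\dot u_a+\nabla_a\ln\la .
\]
This is the paper's modified key identity. There is no extra $u_au_b\,\dot{\ln\la}$ term. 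Your expression with $\dot u-\nabla\ln\la$ differs from the true one by $4\la\,u_{(a}\nabla_{b)}\ln\la$. The part of this with one index orthogonal to $u$ is generally nonzero, so it cannot be absorbed into anything proportional to $u_au_b$. In particular, your claimed form $\la(\bar K^\pm_{ab}\mp 2u_{(a}A_{b)})$, with the same $A_a=\dot u_a-\nabla_a\ln\la$ in both cases, is false for spacelike $u$.

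The effect is concrete. Repeat the converse of (a) with $\nabla_a f=\dot u_a$ and $\la=e^f$. When $\bar K^+_{ab}[u]=0$ this gives $K_{ab}[\la u]=4\la\,u_{(a}\dot u_{b)}$, which vanishes only if $\dot u_a=0$. The fix is to take $\la=e^{-f}$, so that $A^+_a=0$.

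In the forward direction, contract with $u^a$ and then with $u^b$. This gives $u^aA^+_a=0$ and then $A^+_a=0$, i.e.\ $\dot u_a=-\nabla_a\ln|\la|$. That form is closed, so $B_{ab}[u]=0$.

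With this sign corrected, your argument is the paper's. The paper also records a shortcut that gets the sign right automatically: substitute $u\mapsto iu$ in the timelike identity, using $\dot{(iu)}_a=-\dot u_a$, $B_{ab}[iu]=-B_{ab}[u]$ and $\bar K_{ab}[iu]=i\bar K^+_{ab}[u]$.
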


\begin{proof}
Equation $B_{ab}[u] = 0$ implies that locally $u_a$ is the gradient of a
scalar (Poincar\'e lemma), which we might as well denote $\ln\la$.
This implies $A_a[\la, u] = 0$. But the calculations
in~\eqref{eq:killing-lu}--\eqref{eq:killing-Kb-A} above imply that the
equations $\bar{K}_{ab}[u] = 0, A_a[\la, u] = 0$ and $K_{ab}[\la
u] = 0$ are equivalent.

When $u_a$ is spacelike, the argument is completely analogous, with the
slightly modified key identity $K_{ab}[\la u] = \la (\bar{K}_{ab}[u] + 2
u_{(a} A^+_{b)}[\la,u])$, where $A^+_a[\la,u] := \dot{u}_a +
\nabla_a \ln\la$. As a shortcut, the appropriate equations can be
obtained by noting that $B_{ab}[iu] = -B_{ab}[u]$ and $\bar{K}_{ab}[iu]
= i\bar{K}^+_{ab}[u]$, where $i^2=-1$.
\end{proof}

We will consistently refer to~\eqref{eq:ukilling}
or~\eqref{eq:s-ukilling} as the \emph{unit Killing equations}, though
following~\cite[\textsection 2.2(vi)]{ehlers-mech} they could be equally
well called the \emph{isometric vector equations}. Notably, unlike the
Killing equations, the unit Killing equations are non-linear, which is
due to the non-linearity of the parametrization $v_a = \lambda u_a$ with
unit $u_a$.

It is worth remarking on the \emph{type} of PDE that the unit Killing
system belong to. It is well-known that the Killing equation is an
example of a PDE system of \emph{finite type}, which is characterized by
the property that in any sufficiently small neighborhood of a point, all
solutions are uniquely determined by data of finite differential order
at that point, (cf.~\cite[Def.7, App.A]{kh-compat} and references
therein for a discussion of this definition). It is this property that
is responsible for the well-known fact that an $n$-dimensional manifold
has at most $n(n+1)/2$ independent Killing vectors. The clearest way to
establish the finite type property of a PDE system is to show that it is
equivalent to the equation of parallel sections of some bundle with
respect to a connection (the prototypical equation from this class),
with the rank of the bundle determining the maximal dimension of the
space of solutions at a point. We refer to such an equivalence as a
\emph{prolongation to canonical form}, and to be precise by
\emph{equivalent} we mean that there exist differential operators in
both directions that establish a bijection on the solutions of the two
PDE systems. For the Killing equation, that canonical form is known as
the \emph{Kostant connection} form~\cite{kostant} or the \emph{Killing
transport equations}~\cite{geroch} (cf.~\cite[Sec.3]{kh-compat} for a
clean derivation):
\begin{equation} \label{eq:cankilling}
	\nabla_a v_b = \omega_{ab}, \quad
	\nabla_a \omega_{bc} = R^d{}_{abc} v_d ,
\end{equation}
where antisymmetrizing the first equation gives $\omega_{ab} =
\nabla_{[a} v_{b]}$, and plugging that in recovers the original equation
on $v_a$.

Being a kind of projection of the Killing equation, the unit Killing
system should also be of finite type. This can be independently verified
from the following prolonged canonical form:
\begin{thm}[prolongued unit Killing equations] \label{thm:canukilling}
The following system is equivalent to \eqref{eq:ukilling}:
\begin{equation} \label{eq:canukilling}
\begin{aligned}
	\nabla_a u_b &= \omega_{ab} + 2 u^c u_{(a}\omega_{b)c}, \\
	\nabla_c \omega_{ab} &=
		u^d (2 \omega_{d[a}\omega_{b]c} + 2 \omega_{ab}\omega_{cd} + 3 u^e u_{[b}\omega_{a]d}\omega_{ce} + u_c u_{[a}\omega_{b]}{}^e \omega_{de})
		\\ & \quad {}
		+ u_{[b}\omega_{a]}{}^d \omega_{cd}
		+ u^d (u_{[a} R_{b]dce} u^e - R_{abcd})
	.
\end{aligned}
\end{equation}
The original equations on $u_a$ are recovered by antisymmetrizing the first
equation to get
\begin{equation} \label{eq:canukilling-defs}
	\omega_{ab} = \nabla_{[a}u_{b]},
\end{equation}
and plugging it in.
\end{thm}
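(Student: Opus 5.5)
The plan is to prove the two implications separately. In each, the first line of \eqref{eq:canukilling} is handled by pure linear algebra. The second line comes from the Kostant connection form \eqref{eq:cankilling} of the rescaled Killing vector $v_a=\la u_a$ supplied by Proposition~\ref{prp:ukilling}.

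\emph{First equation.} Since $u^bu_b=-1$, we have $u^b\nabla_a u_b=0$, so
\[
\nabla_a u_b=-u_a\dot u_b+\gamma_a^c\gamma_b^d\nabla_c u_d .
\]
The spatial symmetric part is $\tfrac12\bar K_{ab}[u]$. Hence $\bar K_{ab}[u]=0$ holds iff $\nabla_a u_b=-u_a\dot u_b+\bar\omega_{ab}$ with $\bar\omega_{ab}$ spatial and antisymmetric. Antisymmetrizing gives $\omega_{ab}=\nabla_{[a}u_{b]}=\bar\omega_{ab}-u_{[a}\dot u_{b]}$. Contracting with $u^a$ gives $u^a\omega_{ab}=\tfrac12\dot u_b$, i.e.\ $\dot u_b=2u^c\omega_{cb}$. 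Substituting back,
\[
\nabla_a u_b=\omega_{ab}-u_{(a}\dot u_{b)}=\omega_{ab}+2u^cu_{(a}\omega_{b)c},
\]
which is the first line of \eqref{eq:canukilling}. Conversely, suppose the first line holds for some antisymmetric $\omega_{ab}$. Its antisymmetric part is exactly $\omega_{ab}=\nabla_{[a}u_{b]}$, which is \eqref{eq:canukilling-defs}. Its symmetric part is $-u_{(a}\dot u_{b)}$, whose double projection by $\gamma$ vanishes, so $\bar K_{ab}[u]=0$. Thus the first line is equivalent to $\bar K_{ab}[u]=0$ together with \eqref{eq:canukilling-defs}.

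\emph{Second equation, forward direction.} Assume \eqref{eq:ukilling}. By Proposition~\ref{prp:ukilling} there is locally $\la$ with $\dot u_a=\nabla_a\ln\la$ and $v=\la u$ Killing. Then
\[
\Omega_{ab}:=\nabla_a v_b=\la(\nabla_a u_b+u_b\dot u_a),
\]
and \eqref{eq:cankilling} gives $\nabla_c\Omega_{ab}=\la R^d{}_{cab}u_d$. Expanding the left side with $\nabla_c\la=\la\dot u_c$ and dividing by $\la$ yields an identity of the form
\[
\nabla_c\nabla_a u_b+\dot u_c(\nabla_a u_b+u_b\dot u_a)+(\nabla_c u_b)\dot u_a+u_b\nabla_c\dot u_a=R^d{}_{cab}u_d .
\]
Antisymmetrize in $ab$ and replace $\nabla_{[a}u_{b]}$ by $\omega_{ab}$ and $\nabla u$ by the first line. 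The remaining unknown is $\nabla_c\dot u_a=2\nabla_c(u^d\omega_{da})$, which contains $u^d\nabla_c\omega_{da}$ and a term quadratic in $\omega$. This gives a linear equation
\[
\nabla_c\omega_{ab}+u_{[b}u^d\nabla_{|c|}\omega_{|d|a]}=Q_{cab}+(\text{curvature}),
\]
with $Q$ quadratic in $\omega$ and polynomial in $u$. I will solve it by first contracting with $u^a$, which isolates $u^d\nabla_c\omega_{da}$ (this is essentially the $\dot u$-transport). I then substitute that back. I expect the $u^eu_{[b}\omega_{a]d}\omega_{ce}$ and $u_cu_{[a}\omega_{b]}{}^e\omega_{de}$ terms in \eqref{eq:canukilling} to come from this back-substitution. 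The symmetry of $\nabla_c\dot u_a$, i.e.\ $B=0$, is what makes the result consistent.

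\emph{Converse, and the main obstacle.} Given a solution $(u,\omega)$ of \eqref{eq:canukilling}, the first line already gives $\bar K[u]=0$ and $\dot u_b=2u^c\omega_{cb}$. It remains to check $B_{ab}[u]=2\nabla_{[a}\dot u_{b]}=0$. I will expand $\nabla_a(u^c\omega_{cb})$ using both lines of \eqref{eq:canukilling}, antisymmetrize, and verify cancellation. The curvature terms should drop out by the algebraic Bianchi identity and the quadratic terms by antisymmetry of $\omega$.

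I expect the hardest part to be the bookkeeping in solving for $\nabla_c\omega_{ab}$ in the forward direction, and checking in the converse that $B$ vanishes identically rather than only modulo integrability conditions. If the direct inversion becomes unwieldy, the fallback is to decompose $\nabla_c\omega_{ab}$ relative to $u$ into $u$-longitudinal and spatial parts and solve each part separately.
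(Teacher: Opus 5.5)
Your proposal is correct. For the first line and for the converse direction it matches what the paper does. The first line is equivalent to $\bar K_{ab}[u]=0$ together with $\omega_{ab}=\nabla_{[a}u_{b]}$. Substituting both lines into $B_{ab}[u]=4\nabla_{[a}(u^c\omega_{|c|b]})$ gives zero identically. I checked this with $w_a:=u^d\omega_{da}$ and $y_a:=w^d\omega_{da}$. The curvature part of $u^c\nabla_a\omega_{cb}$ is $\tfrac12\tau_{ab}$ with $\tau_{ab}=R_{acbd}u^cu^d$, which is symmetric by pair symmetry. The only antisymmetric quadratic pieces are $-u_{[a}y_{b]}$ from $(\nabla_{[a}u^c)\omega_{|c|b]}$ and $+u_{[a}y_{b]}$ from the second line, and these cancel.

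You differ from the paper in the forward direction. The paper never introduces $\lambda$. Instead it writes off-shell identities, valid for arbitrary unit $u_a$ with $\omega_{ab}:=\nabla_{[a}u_{b]}$:
\begin{itemize}
\item the first line minus its right-hand side equals $\tfrac12\bar K_{ab}[u]$;
\item the second line minus its right-hand side equals $F_{abc}+2u_cG_{[ba]}+2G_{c[a}u_{b]}$, an explicit combination of $\bar K$, $\nabla\bar K$ and $B$.
\end{itemize}
You argue on-shell instead. Proposition~\ref{prp:ukilling} gives a local $\lambda$ with $\dot u_a=\nabla_a\ln\lambda$. You then import the Kostant transport equation for $v=\lambda u$, divide out $\lambda$, and are left with an algebraic inversion.

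I carried out that inversion, and it works as you describe:
\begin{itemize}
\item the $w_cu_{[b}w_{a]}$ terms cancel, with coefficients $4-2-2$;
\item the $u^a$-contraction determines $2u^d\nabla_c\omega_{db}$ explicitly;
\item back-substitution reproduces the stated right-hand side term by term, including the $-3w_cu_{[a}w_{b]}$ and $-u_cu_{[a}y_{b]}$ pieces you anticipated.
\end{itemize}

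Your route is shorter and more conceptual, because it recycles the known Killing prolongation. The paper's route needs a longer computation, but it needs neither the Poincar\'e lemma nor $\lambda$. It also shows that the prolonged equations factor through $(\bar K,B)$ as differential operators. That off-shell factorization is the same kind of structure the authors rely on later in the propagation-identity section.
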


\begin{rem}
From the rank of the auxiliary bundle in the prolonged canonical
forms~\eqref{eq:cankilling} and~\eqref{eq:canukilling}, we can tell that
the maximum dimension of the solution space: the well-known $n(n+1)/2$
for Killing and $n(n+1)/2-1$ for unit Killing (as $u_a$ is unit
normalized).
\end{rem}

\begin{proof}
One direction is straightforward since it can be checked that a solution of~\eqref{eq:canukilling} solves the unit Killing equations by mechanically substituting~\eqref{eq:canukilling} into~\eqref{eq:ukilling} and eliminating $\omega_{ab}$ using the definition~\eqref{eq:canukilling-defs}. In the other direction, given a solution $u_a$ of
the unit Killing system, it solves the prolonged system~\eqref{eq:canukilling} when
the extra $\omega_{ab}$ component is defined
by~\eqref{eq:canukilling-defs}. Up to using this definition, the $\nabla_a u_b$
equation coincides with $\frac12 \bar{K}_{ab}[u]$, while the $\nabla_c \omega_{ab}$ equation coincides with
$F_{abc}[u] + 2 u_c G_{[ba]}[u] + 2G_{c[a}[u] u_{b]}$, where $F_{abc}[u] := \nabla_{[a} \bar{K}_{b]c}[u] + u^d \omega_{d[a} \bar{K}_{b]c}[u] + (u_c \omega_{d[a} - \omega_{cd} u_{[a}) \bar{K}_{b]}{}^d [u] $, $G_{ab}[u] := \frac12 u^c F_{acb}[u] + \frac12 J_{ab}[u] - u^c u_{(a} J_{b)c}[u]$ and $J_{ab}[u] := \frac12 B_{ab}[u] + \frac12 ( \frac12\bar{K}_{[a}{}^c [u] - \frac12 \nabla_{[a} u^c + \nabla^c u_{[a}) \bar{K}_{b]c}[u]$. Hence, $u_a \mapsto (u_a, \nabla_{[a} u_{b]})$ and $(u_a, \omega_{ab}) \mapsto u_a$ clearly give a bijection on solutions.
\end{proof}

\subsection{Unit Killing initial data} \label{sec:uKID}

Let $\Sigma \subset M$ be a spacelike hypersurface. We will use capital
Latin letters $A, B, C, \dots$ for tensors intrinsic to $\Sigma$. With
the metric induced from $(M,g_{ab})$ it is a Riemannian manifold
$(\Sigma, g_{AB})$ equipped with symmetric tensor $\pi_{AB}$, the
\emph{second fundamental form}.
The triple $(\Sigma, g_{AB}, \pi_{AB})$ can be interpreted as the
initial data for the Einstein equations.
To fix conventions, denote by $D_A$ be the Levi-Civita connection on
$(\Sigma,g_{AB})$, while the spatial Riemann and Ricci tensors are
denoted $r_{ABCD}$ and $r_{AB}$. Let $t$ be the future increasing time
function of \emph{Gaussian normal coordinates} defined locally around
$\Sigma$. That is, $\Sigma = \{ t = 0 \}$, while $n_a = \nabla_a t$ is
timelike unit normal, geodesic and orthogonal to $\Sigma$. The extrinsic
curvature is then defined as  
\begin{equation}
	\pi_{AB} = \frac12 \mathcal{L}_n g_{AB} ,
\end{equation}
where $\mathcal{L}_n$ denotes the Lie derivative with respect to $n$.
Any vector can be decomposed as $v_a = v_0
(\nabla_a t) + h_a^A v_A$, where $h_a^A$ is the inclusion map of spatial
vectors into spacetime ones. Analogous decompositions hold for higher
rank tensors.

Now, when the spacetime satisfies the $\La$-vacuum Einstein equations
$R_{ab} = \frac{2\La}{n-2} g_{ab}$, which can be decomposed into 
\begin{equation} \label{eq:La-vac}
\begin{aligned}
	\nabla_0 \pi_{AB} + \pi \pi_{AB} + r_{AB} &= \frac{2\La}{n-2} g_{AB} , \\
	D_C \pi^C{}_A - D_A \pi &= 0 , \\
	\nabla_0 \pi_C{}^C + \pi_{CD} \pi^{CD} &= \frac{2\La}{n-2} ,
\end{aligned}
\end{equation}
where $\nabla_0 = n^a \nabla_a$, the \emph{Killing
Initial Data (KID) equations} are~\cite[Eq.(20)]{khgp-cykid}
\begin{equation} \label{eq:KID}
\begin{aligned}
	\KIDi_{AB}[v] &:= D_A v_B + D_B v_A + 2\pi_{AB} v_0 = 0 ,
	\\
	\KIDii_{AB}[v] &:= D_{A} D_{B} v_{0}
	+ (2(\pi\cdot\pi)_{A B} + \nabla_0 \pi_{AB}) v_0 \\
	& \quad {}
	+ 2 \pi_{(B}{}^{C} D_{A)} v_C
	+ (D^{C}{\pi_{A B}}) v_{C} = 0 ,
\end{aligned}
\end{equation}
where $(v_0, v_A)$ are the components of a spacetime vector $v_a$ on
$\Sigma$ that will (at least locally) give rise to a Killing vector of
$(M,g_{ab})$ under a certain extension away from $\Sigma$.

\begin{rem}
Assuming that the Einstein equations hold on a neighborhood of $\Sigma
\subset M$, we could express any time derivatives of $\pi_{AB}$ in terms
of $g_{AB}$ and $\pi_{AB}$. However, that would rely on picking a
specific matter model and stress energy tensor (e.g.,\ vacuum, perfect
fluid, etc.). Guided by our motivation to consider different matter
models in the future, we have decided not to expand $\nabla_0 \pi_{AB}$
in that way. Our version of the KID equations~\eqref{eq:KID} hence
actually depend on the \emph{extended} initial data quadruple $(\Sigma,
g_{AB}, \pi_{AB}, \nabla_0\pi_{AB})$, where the last element is a new
independent symmetric tensor intrinsic to $\Sigma$, with $\nabla_0$ only
appearing formally as a suggestive notation. These KID equations should
then be valid provided one of the methods of establishing an isomorphism
between their solutions on $\Sigma$ and the Killing vectors on a
neighborhood of $\Sigma$ holds, either in vacuum~\cite{MONCRIEF-KID1,
COLL77, beig-chrusciel, khgp-ckid} or with certain kinds of
matter~\cite{racz-kid1, racz-kid2}.
\end{rem}

The rescaling $v_a \mapsto \la v_a$ is equivalent to $(v_0, v_A)
\mapsto (\la v_0, \la v_A)$ and the normalization $u_a u^a = -1$ is
equivalent to $u_A u^A = u_0^2 - 1$. Choosing $\la$ so that $v_a =
\la u_a$, with $u_a$ unit normalized, gives%
\begin{equation} \label{eq:KID-lu}
\begin{aligned}
	\la^{-1} \KIDi_{AB}[\la u]
	&= \KIDi_{AB}[u] + u_A D_B \ln\la + u_B D_A \ln\la ,
	\\
	\la^{-1} \KIDii_{AB}[\la u]
	&= \KIDii_{AB}[u] + u_0 \la^{-1} D_A D_B \la
		+ 2 (D_{(A} u_0) (D_{B)} \ln\la)
	\\ & \quad {}
		- 2 u_C \pi^C{}_{(B} D_{A)} \ln\la
	,
	\\
	&= \KIDii_{AB}[u] + u_0 [D_A D_B \ln\la + (D_A\ln\la) (D_B\ln\la)]
	\\ & \quad {}
		+ 2 (D_{(A} u_0) (D_{B)} \ln\la)
		- 2 u_C \pi^C{}_{(B} D_{A)} \ln\la
	,
\end{aligned}
\end{equation}
and the contractions
\begin{equation}
\begin{aligned}
	u^A u^B \la^{-1} \KIDi_{AB}[\la u]
	&= u^A u^B \KIDi_{AB}[u] + 2 (u_C u^C) u^B D_B \ln\la
	,
	\\
	u^B \la^{-1} \KIDi_{AB}[\la u]
	&= u^B \KIDi_{AB}[u] + u_A (u^B D_B \ln\la)
	\\ & \quad {}
		+ (u_C u^C) D_A \ln\la
	.
\end{aligned}
\end{equation}
These equations lead us to define the operator
\begin{equation} \label{eq:uKID-L}
	L_{A}[v] := 
	\frac{v^B \KIDi_{AB}[v]}{(v_C v^C)}
		- \frac{v_A (v^B v^C \KIDi_{BC}[v])}{2(v_C v^C)^2}
	,
\end{equation}
which obviously factors through the KID equations and has the key property
\begin{equation}
	L_A[\la u] = L_A[u] + D_A \ln\la .
\end{equation}
Of course, $L_a[v]$ is defined only where $v_A$ is non-zero.
Next, let us define the following operators acting on a unit vector
$(u_0,u_A)$:
\begin{equation} \label{eq:uKID-def}
\begin{aligned}
	\uKIDo_{AB}[u] &:= 2D_{[A} L_{B]}[\la u] , \\
	\uKIDi_{AB}[u] &:= \la^{-1} \KIDi_{AB}[\la u] - 2 u_{(A} L_{B)}[\la u] , \\
	\uKIDii_{AB}[u] &:= \la^{-1} \KIDii_{AB}[\la u]
		\\ &\quad {}
		- u_0 \left(D_{(A} L_{B)}[\la u] + L_A[\la u] L_B[\la u] - 2 L_{(A}[u] L_{B)}[\la u]\right)
		\\ &\quad {}
		- 2L_{(A}[\la u] D_{B)} u_0
		+ 2u_C \pi^C{}_{(A} L_{B)}[\la u] ,
\end{aligned}
\end{equation}
where the $\la$-dependence on the right-hand side completely cancels
(otherwise the definition would not make sense). A $\la$-free definition
is obtained by simply setting $\la = 1$ with unit $(u_0,u_A)$. Note that
one term in $\uKIDii_{AB}[u]$ contains both $L_A[u]$ and $L_A[\lambda
u]$, the important point is that it depends at least linearly on
$L_A[\lambda u]$. After setting $\lambda = 1$, the distinction between
these expressions disappears.

Analogous to Proposition~\ref{prp:ukilling}, we obtain
\begin{thm}[Unit Killing initial data] \label{thm:uKID}
Let $(\Sigma, g_{AB}, \pi_{AB}, \nabla_0\pi_{AB})$ be an extended
initial Einstein data surface.
(a) Given a non-vanishing spatial vector field $u_A$ on $\Sigma$, setting
$u_0^2 = 1 + u_A u^A$, the equations
\begin{equation} \label{eq:uKID}
	\uKIDo_{AB}[u] = 0 , \quad
	\uKIDi_{AB}[u] = 0 , \quad
	\uKIDii_{AB}[u] = 0
\end{equation}
are satisfied iff there locally exists a scalar $\la$ such that $(v_0,
v_A) = (\la u_0, \la u_A)$ is timelike, non-orthogonal to $\Sigma$ and
satisfies the KID equations~\eqref{eq:KID}.

(b) If $u_A u^A \ge 1$ and we set $u_0^2 = -1 + u_A u^A$ in the defining
equations~\eqref{eq:uKID-def}, then the equations~\eqref{eq:uKID}
are satisfied iff there locally exists a scalar $\la$ such that $(v_0,
v_A) = (\la u_0, \la u_A)$ is spacelike and satisfies the KID
equations~\eqref{eq:KID}.
\end{thm}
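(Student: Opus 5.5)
Following the proof of Proposition~\ref{prp:ukilling}, the plan is to mimic the spacetime argument at the level of initial data, with $L_A$ playing the role of $-\nabla_A\ln\la$ (up to sign conventions) and $\uKIDo_{AB}[u]$ that of $B_{ab}[u]$. The key identity is $L_A[\la u] = L_A[u] + D_A\ln\la$, together with~\eqref{eq:KID-lu}. We first check the claimed $\la$-independence of the right-hand sides in~\eqref{eq:uKID-def}. For $\uKIDo$ this is immediate, since $D_{[A}D_{B]}\ln\la=0$. For $\uKIDi$ it follows from the first line of~\eqref{eq:KID-lu}. For $\uKIDii$, substituting $L_A[\la u]=L_A[u]+D_A\ln\la$ into the quadratic combination $L_AL_B[\la u] - 2L_{(A}[u]L_{B)}[\la u]$ gives $(D_A\ln\la)(D_B\ln\la) - L_A[u]L_B[u]$. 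This cancels the $u_0(D\ln\la)(D\ln\la)$ term of~\eqref{eq:KID-lu}, and the remaining $D_AD_B\ln\la$, $D u_0$ and $\pi$ terms cancel linearly. We therefore set $\la=1$ when convenient.

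\emph{($\Leftarrow$)} Suppose $v=\la u$ solves~\eqref{eq:KID}. Then $L_A[\la u]$ factors through $\KIDi[\la u]$ and vanishes, since $v_Cv^C=\la^2 u_Cu^C\neq0$ because $u_A$ is non-vanishing. Hence $\uKIDo[u]=0$. Moreover $\uKIDi$ and $\uKIDii$ equal $\la^{-1}\KIDi[\la u]$ and $\la^{-1}\KIDii[\la u]$ plus terms at least linear in $L[\la u]$, so they also vanish. The point that the term $L_{(A}[u]L_{B)}[\la u]$ is linear in $L[\la u]$ is exactly what is needed here.

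\emph{($\Rightarrow$)} Suppose~\eqref{eq:uKID} holds. Since $\uKIDo_{AB}[u]=2D_{[A}L_{B]}[u]=0$, the Poincar\'e lemma gives locally a scalar with $L_A[u] = -D_A\ln\la$. Then $L_A[\la u]=0$, and the definitions reduce to $\la^{-1}\KIDi[\la u]=\uKIDi[u]=0$ and $\la^{-1}\KIDii[\la u]=\uKIDii[u]=0$. Timelikeness follows from $u_0^2=1+u_Au^A$, non-orthogonality from $u_0\neq0$, and $\la$ may be taken positive since it is an exponential. Part (b) is identical with the normalization $u_0^2=-1+u_Au^A$, because neither~\eqref{eq:KID-lu} nor the $L$ identity used the sign of the normalization. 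The only place normalization enters is the denominator $v_Cv^C$, which is nonzero because $u_Au^A\ge1$.

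\emph{Main obstacle.} The real content lies in the identity $L_A[\la u]=L_A[u]+D_A\ln\la$ and its consistency with the contractions, which must hold for an arbitrary, not necessarily unit, spatial part. I would verify it directly. Set $\ell=D\ln\la$ and $s=u_Cu^C$. The contraction formulas give $s^{-1}\bigl(u^B\KIDi_{AB}+u_A(u\cdot\ell)+s\ell_A\bigr) - u_A\bigl(u^Bu^C\KIDi_{BC}+2s(u\cdot\ell)\bigr)/(2s^2)$. The $u_A(u\cdot\ell)$ terms cancel, leaving $L_A[u]+\ell_A$. Note that these contraction formulas must use $\KIDi[\la u]/\la$ with $u^A u^B$ contracted and not $u_0$-dependent, which holds since $\KIDi$'s $\pi v_0$ term scales homogeneously. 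A second subtle point is that $L$ is built from $\KIDi$ alone, so $\KIDi[\la u]=0$ implies $L[\la u]=0$. This must be used in ($\Leftarrow$), and I would double-check that no hidden dependence on $\KIDii$ enters $\uKIDo$.
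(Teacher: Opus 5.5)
Your proposal is correct and is essentially the paper's proof. The paper only says that the argument of Proposition~\ref{prp:ukilling} carries over with $L_A[\la u]$ in the role of $A_a[\la,u]$; you additionally verify $L_A[\la u]=L_A[u]+D_A\ln\la$ and the $\la$-independence of \eqref{eq:uKID-def}, and for part (b) you observe directly that none of these identities uses the normalization, where the paper instead appeals to the formal substitution $u_A\mapsto i u_A$.

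Two small caveats, neither affecting the structure of your argument. First, ``non-orthogonal to $\Sigma$'' means $v_A\neq 0$ (cf.\ the orthogonal case $(v_0,v_A)=(\la,0)$), which follows from $u_A\neq 0$ rather than from $u_0\neq 0$. Second, expanding $\KIDii_{AB}[\la u]$ directly from \eqref{eq:KID} by the Leibniz rule gives $+2u_C\pi^C{}_{(B}D_{A)}\ln\la$, not the minus sign displayed in \eqref{eq:KID-lu}; so the $\pi$-terms in your cancellation check only cancel if the last term of $\uKIDii_{AB}$ in \eqref{eq:uKID-def} carries a minus sign. This is a typo you inherit from the paper, and once it is corrected your proof goes through unchanged.
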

\begin{proof}
The proof is completely analogous to that of
Proposition~\ref{prp:ukilling} and follows from the calculations
in~\eqref{eq:KID-lu}--\eqref{eq:uKID-def}, with $L_A[\la u]$ playing the
role of $A_a[\la,u]$. Note also that the substitution $u_A \mapsto i
u_A$ in~\eqref{eq:uKID} yields the same equations up to a factor of $i$,
but $u_0$ given by the formula in part (b).
\end{proof}
We will refer to~\eqref{eq:uKID} as the \emph{unit Killing initial data
(uKID) equations}. Again, like in the remark following
Proposition~\ref{prp:ukilling}, unlike the KID equations, the uKID
equations are non-linear.

Not surprisingly, the KID and uKID systems are also PDEs of finite type
(see the discussion in Section~\ref{sec:ukilling}). For the record, we
give their prolongations to canonical form:

\begin{thm}[prolongued KID and uKID equations] \label{thm:canuKID}
(a) The following system is equivalent to \eqref{eq:KID}:
\begin{equation} \label{eq:canKID}
\begin{aligned}
D_A v_B &= -\pi_{AB} v_0 + \omega_{AB} , \\
D_A v_0 &= p_A , \\
D_A \omega_{BC} &= 2 (\pi_{A[B} p_{C]} - v_0 D_{[B} \pi_{C]A}) - r_{BCAD}
u^D , \\
D_A p_B &= -2\pi_{(A}{}^C \omega_{B)C} - u^C D_C \pi_{AB} - v_0 \nabla_0 \pi_{AB} .
\end{aligned}
\end{equation}
The original equations on $v_A$ are recovered by using the following
substitutions, obtained from the first two equations above:
\begin{equation} \label{eq:canKID-defs}
	\omega_{AB} = D_{[A} v_{B]} , \quad
	p_A = D_A v_0 .
\end{equation}

(b) The following system is equivalent to \eqref{eq:uKID}:
\begin{equation} \label{eq:canuKID}
\begin{aligned}
	D_A u_B &=
		\omega_{AB} - \pi_{AB}\,u_0 + 2 q_{(A} u_{B)},
\\
	D_A q_B &=
		\frac12 \omega_A{}^{C}\omega_{BC} + \frac52 q_A q_B + \omega_{C(A} u_{B)} q^{C} + \frac12 q_C q^{C} u_A u_B 
\\
& \quad
		+ \frac12 (\pi_{AC} \pi_{BD} - \pi_{AB} \pi_{CD} - r_{ACBD} ) u^{C} u^{D} - \frac{\pi_{AB} q_C u^{C}}{u_0}
\\
& \quad
		- \frac{1}{2 u_0^2} (q_A q_B + q^{C} q^{D} u_A u_B u_C u_D + 2\omega_{D(A} q^{C} u_{B)} u_C u^{D}
\\
& \quad
		- 2q^{C} u_C q_{(A} u_{B)} - 2\omega_{C(A} q_{B)} u^{C} + \omega_{AC} \omega_{BD} u^{C} u^{D} )		
\\
& \quad
		- u_0u^{C} ( D_{(A} \pi_{B)C} + D_C \pi_{AB} )
\\
& \quad
		+ \frac12 u_0^2 (\nabla_0 \pi_{AB} + \pi_A{}^{C} \pi_{BC}),
\\
	D_A \omega_{BC} & = 
		2 \omega_{BC} q_A + 2 \omega_{A[C} q_{B]} + \omega_A{}^D \omega_{D[C} u_{B]} - 2 q_A q_{[C} u_{B]}
\\
& \quad
		 + \pi_{A[C} \pi_{B]D} u^D - r_{BCAD} u^D + r_{ADE[B} u_{C]} u^D u^E
\\
& \quad		
		+ \pi_{DE} \pi_{A[C} u_{B]} u^D u^E + q^D \omega_{D[C} u_{B]} u_A
\\
& \quad
		+ \pi_{AD} \pi_{E[B} u_{C]} u^D u^E - 2 \pi_{A[C} q_{B]}
\\
& \quad
		- \frac{1}{u_0} (2 \pi_{A[C} \omega_{B]D} u^D - 2 \pi_{A[C} q_{B]} + q_A q_{[C} u_{B]})
\\
& \quad
		+ \frac{1}{u_0^2}(\omega_{D[C} u_{B]} q_A u^D - \omega_{AD} q_{[C} u_{B]} u^D - q^D q_{[C} u_{B]} u_A u_D 
\\
& \quad	
		+ q_A q_{[C} u_{B]} + q^D\omega_{E[B} u_{C]} u_A u_D u^E - \omega_{AD} \omega_{E[C} u_{B]} u^D u^E)  
\\
& \quad
		- u_0 [ (D_A \pi_{D[B}) u_{C]} u^D + 2D_{[B} \pi_{C]A} + u^D u_{[C} (D_{B]} \pi_{AD}) 
\\
& \quad
		+ 2 (D_D \pi_{A[C}) u_{B]} u^D + 2 \pi_{A[C} q_{B]}]
\\
& \quad
		- u_0^2 (\nabla_0 \pi_{A[C} + \pi_A{}^D \pi_{D[C}) u_{B]}.
\end{aligned}
\end{equation}
The original equations on $u_A$ are recovered using the following
substitutions, obtained from the first equation above:
\begin{equation} \label{eq:canuKID-defs}
\begin{aligned}
	\omega_{AB} &:= D_{[A}u_{B]}, \\
	q_A &:= \left( \frac{\pi_{AB} u^B}{u^D u_D} - \frac{\pi_{BC} u^B u^C}{2(u^D u_D)^2} u_A \right) u_0 + \frac{u^B(D_{(A} u_{B)})}{u^D u_D} - \frac{u^B u^C (D_C u_B)}{2(u^D u_D)^2} u_A .
\end{aligned}
\end{equation}
\end{thm}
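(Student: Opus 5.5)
The plan follows the proof of Theorem~\ref{thm:canukilling}: establish two differential maps between solution spaces that are mutually inverse. The direction from \eqref{eq:canKID} (resp.\ \eqref{eq:canuKID}) to \eqref{eq:KID} (resp.\ \eqref{eq:uKID}) is mechanical. The substantive direction is to show that, on solutions of the original system, the auxiliary fields defined by \eqref{eq:canKID-defs} (resp.\ \eqref{eq:canuKID-defs}) satisfy the prolonged equations. Equivalently, each prolonged equation minus its left-hand side must be a differential consequence of the original operators, i.e.\ a linear combination of $\KIDi$, $\KIDii$ (resp.\ $\uKIDo$, $\uKIDi$, $\uKIDii$) and their first $D$-derivatives.

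\textbf{Part (a).}
\begin{enumerate}
\item Decompose $D_Av_B$ into symmetric and antisymmetric parts. The symmetric part is $\tfrac12\KIDi_{AB}-\pi_{AB}v_0$, so the first line of \eqref{eq:canKID} is $\tfrac12\KIDi_{AB}$ up to the definition of $\omega_{AB}$. The second line is a definition.
\item The $D_Ap_B$ line is $\KIDii_{AB}$ rewritten using $p_A=D_Av_0$ and $D_Av_C=\omega_{AC}-\pi_{AC}v_0$. The term $2(\pi\cdot\pi)_{AB}v_0$ cancels against the $\pi\pi v_0$ term produced by the substitution.
\item For the $D_A\omega_{BC}$ line, run the standard Killing-transport argument. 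Compute $D_AD_{[B}v_{C]}$ and use the Ricci identity together with the cyclic identity, in the form $D_A\omega_{BC}=D_{[B}K_{C]A}$-type terms plus curvature. Replace $K$ by $\KIDi-2\pi v_0$.
\end{enumerate}
Since every step is an explicit identity of the form ``equation $=$ combination of $\KIDi$, $D\KIDi$, $\KIDii$'', the maps $v\mapsto(v,\omega,p)$ and $(v,\omega,p)\mapsto v$ are inverse bijections on solutions. I would also read the $u^D$ in the third line of \eqref{eq:canKID} as $v^D$.

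\textbf{Part (b).} I reduce to part (a) via Theorem~\ref{thm:uKID}. Locally a uKID solution $u$ gives $v=\la u$ solving KID, with $D_A\ln\la=-L_A[u]$. Reading off from \eqref{eq:canuKID-defs}, $q_A$ equals $L_A[u]$ up to the $\pi$-dependent piece coming from the $2\pi_{AB}v_0$ term in $\KIDi$. Then:
\begin{enumerate}
\item Substitute $v_A=\la u_A$, $v_0=\la u_0$ into \eqref{eq:canKID}.
\item Express $\omega^{(v)}_{AB}=\la(\omega_{AB}-u_{[A}D_{B]}\ln\la)$ and $p_A=\la(D_Au_0+u_0D_A\ln\la)$.
\item Eliminate $\la$ using $D\ln\la=-q$, which is exactly what the uKID equations encode through $\uKIDo$ and $\uKIDi$.
\end{enumerate}
The first line of \eqref{eq:canuKID} is then $\la^{-1}\KIDi[\la u]$, i.e.\ $\uKIDi$ plus the $2u_{(A}L_{B)}$ term. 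The $D_Aq_B$ line comes from $\uKIDii$ together with $D_{(A}L_{B)}$. The $D_A\omega_{BC}$ line comes from the third line of \eqref{eq:canKID} after removing the $\la$-derivatives.

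The one ingredient not present in (a) is the derivative of $u_0=\sqrt{1+u_Cu^C}$. I eliminate it via $u_0D_Au_0=u^CD_Au_C$, which is the origin of the $1/u_0$ and $1/u_0^2$ terms. This also requires $u_0\neq0$, i.e.\ the non-orthogonality hypothesis in Theorem~\ref{thm:uKID}(a).

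\textbf{Main obstacle.} The main difficulty is the bookkeeping in (b). Each right-hand side of \eqref{eq:canuKID} must be matched term by term to a combination of $\uKIDo$, $\uKIDi$, $\uKIDii$ and their derivatives, including the normalization constraint and the $u_0$-derivative substitutions. I would carry this out by computer algebra (xAct, following \cite{XACT}). I would first verify the identities symbolically with the canonical definitions plugged in, then extract the explicit coefficient operators, as in the $F,G,J$ decomposition used in the proof of Theorem~\ref{thm:canukilling}. That decomposition gives the bijection $u\mapsto(u,\omega,q)$, with inverse given by projection to $u$.
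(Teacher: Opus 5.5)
Your part (a) is the paper's argument. Modulo \eqref{eq:canKID-defs}, the paper identifies the four lines of \eqref{eq:canKID} with $\frac12\KIDi_{AB}[v]$, a triviality, $D_{[B}\KIDi_{C]A}[v]$ and $\KIDii_{AB}[v]-\pi_{(A}{}^C\KIDi_{B)C}[v]$, which is what your three steps produce. You are right that $u^D$ in the third line should be $v^D$; the same typo occurs in $u^C D_C\pi_{AB}$ in the fourth line.

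For (b) your route is genuinely different from the paper's. The paper does not pass through $\la$. It writes combinations of the prolonged equations directly as combinations of $\uKIDo$, $\uKIDi$, $\uKIDii$ (e.g.\ $D_{[A}q_{B]}+\cdots=\frac14\uKIDo_{AB}+O(\uKIDi)$, with analogous identities for combinations containing $D_A\omega_{BC}$ and $D_{(A}q_{B)}$), and then solves algebraically for the derivatives. You instead use Theorem~\ref{thm:uKID} to get a local $\la$, push $v=\la u$ through part (a), and divide out $\la$. Since \eqref{eq:canKID} is linear and homogeneous in $(v,\omega^{(v)},p)$, only $D\ln\la$ and its derivative survive, and these can be traded for $q$ and $Dq$. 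Your route explains where \eqref{eq:canuKID} comes from and makes the bijection a composition of known ones. The price is that it is purely on-shell: it needs $\uKIDo=0$ and the Poincar\'e lemma to produce $\la$. The paper's route instead gives off-shell factorization identities through the uKID operators. Your closing plan of matching each line to combinations of $\uKIDo,\uKIDi,\uKIDii$ is really the paper's route. Either way, the same symbolic computation is needed to reproduce the displayed right-hand sides.

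Three details must be corrected for the reduction to work as written.

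(i) Expanding $\KIDi_{AB}[u]=2D_{(A}u_{B)}+2\pi_{AB}u_0$ in \eqref{eq:uKID-L} gives $q_A=\frac12 L_A[u]$ exactly. The $\pi$-terms of $q_A$ are precisely those coming from $2\pi_{AB}u_0$, so there is no leftover $\pi$-dependent discrepancy. The elimination rule is therefore $D_A\ln\la=-L_A[u]=-2q_A$, not $-q_A$. With $-q_A$ the first line would come out with $q_{(A}u_{B)}$ instead of $2q_{(A}u_{B)}$.

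(ii) After substitution, the third and fourth lines of \eqref{eq:canKID} do not individually yield the $D_A\omega_{BC}$ and $D_Aq_B$ lines. The third becomes $D_A\omega_{BC}-2(D_Aq_{[B})u_{C]}=\cdots$. Multiplying the fourth by $u_0/\la$ and using $u_0D_AD_Bu_0=u^CD_AD_Bu_C+\cdots$ together with the first line, it becomes $u^CD_A\omega_{BC}+u_Bu^CD_Aq_C-(2+u_Cu^C)D_Aq_B=\cdots$. Contracting the former with $u^C$ and substituting leaves $-2D_Aq_B=\cdots$, so the coupled system is uniquely solvable. This is the step the paper compresses into ``it is clear how to solve''.

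(iii) $u_0\neq0$ is automatic in the timelike case, since $u_0^2=1+u_Au^A\ge1$. The non-orthogonality hypothesis of Theorem~\ref{thm:uKID}(a) means $u_A\neq0$, which is what the denominators $u^Du_D$ in $q_A$ require.
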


\begin{proof}
(a) One direction is straightforward. Given a solution
of~\eqref{eq:canKID}, its $v_0$ and $v_B$ components solve the KID
system~\eqref{eq:KID}, which can be checked by mechanically
substituting~\eqref{eq:canKID} into~\eqref{eq:KID} and eliminating
$\omega_{AB}$ and $p_A$ using the definitions~\eqref{eq:canKID-defs}. A
quick inspection of definitions~\eqref{eq:canKID-defs} shows that they
themselves follow from~\eqref{eq:canKID}, so they are not separate
equations. In the other direction, given a solution $v_0$ and $v_B$ of
the KID system, it solves the prolonged system~\eqref{eq:canKID} when
the extra $\omega_{AB}$ and $p_A$ components are defined
by~\eqref{eq:canKID-defs}. Up to using these definitions, the $D_A v_B$
equation coincides with $\frac{1}{2} \KIDi_{AB}[v]$, the $D_A v_0$
equation is trivial, the $D_A \omega_{BC}$ equation coincides with
$D_{[B} \KIDi_{C]A}[v]$, while the $D_A p_B$ equation coincides with
$\KIDii_{AB}[v] - \pi_{(A}{}^C \KIDi_{B)C}[v]$. These maps clearly give
a bijection on solutions.

(b) The proof is essentially analogous to part (a), though
computationally more involved. The straightforward direction and the
bijectivity follow the same logic. In the other direction, we get a
solution of the prolonged system~\eqref{eq:canuKID} from a solution of
the uKID system~\eqref{eq:uKID} when the extra $\omega_{AB}$ and $q_A$
components are defined by~\eqref{eq:canuKID-defs}. Up to using these
definitions
\begin{align*}
	(D_A u_B + \cdots)
		&= \uKIDi_{AB}[u] , \\
	(D_{[A} q_{B]} + \cdots)
		&= \frac{1}{4} \uKIDo_{AB}[u] + O(\uKIDi[u]) , \\
	\MoveEqLeft
	(D_A \omega_{BC} - 2 u_{A} D_{[B} q_{C]} + 2 u_{[B} D_{C]} q_{A} + \cdots)
		\\
		&= 4 D_{[C} \omega_{B]A}  - 3 D_{[C} \omega_{BA]} + O(\uKIDi[u]) , \\
	\MoveEqLeft
	(D_{(A} q_{B)} + \frac{1}{2} (u_{A} u^C D_{[B} q_{C]} + u_{B} u^C D_{[A} q_{C]}) + \frac{1}{2} u^C D_{(A} \omega_{B)C} + \cdots)
		\\
		&= \frac{u_0}{2} \uKIDii_{AB}[u] + O(\uKIDi[u]) ,
\end{align*}
where $\cdots$ stand for terms of order zero and the omitted terms on
the right-hand side that are not explicitly shown are long and otherwise
unenlightening. From the structure of the left-hand sides above, it is
clear how to solve for $D_A u_B$, $D_A q_B$ and $D_A \omega_{BC}$ by
algebraic operations.
\end{proof}

\begin{rem}
From the ranks of the prolonged equations in~\ref{thm:canuKID} it
immediately follows that the maximum dimensions of the solution spaces
are $n(n+1)/2$ for the KID system and $n(n+1)/2-1$ for the uKID system,
when $\dim \Sigma = n-1$.
\end{rem}

\subsubsection{The orthogonal case}

The discussion leading up to Theorem~\ref{thm:uKID} assumed throughout
that $u_A u^A = u_0^2 - 1 \ne 0$. If that holds, then we are essentially
setting $(v_0, v_A) = (\la, 0)$ in~\eqref{eq:KID}:
\begin{equation}
\begin{aligned}
	2\pi_{AB} \la &= 0 ,
	\\
	D_{A} D_{B} \la
	+ (2(\pi\cdot\pi)_{A B} + \nabla_0 \pi_{AB}) \la  &= 0 ,
\end{aligned}
\end{equation}
and we would like to find the conditions on $g_{AB}$ and $\pi_{AB}$ such
that these equations are solvable for some $\la \ne 0$. The first equation
immediately implies $\pi_{AB} = 0$, while the second equation simplifies
to
\[
	D_{A} D_{B} \la + \la \nabla_0 \pi_{AB} = 0 .
\]
Upon substituting $\nabla_0 \pi$ by its $\La$-vacuum
form~\eqref{eq:La-vac}, we obtain an equation that has been studied in the
classification of vacuum static spacetimes, where $\la$ plays the role of
the lapse. We will not study this equation further here;
see~\cite{lafontaine} and the literature cited therein for more
information.

\section{Propagation identity} \label{sec:prop}

In this section, following~\cite[Sec.2]{khgp-ckid} we recall what we
mean by a \emph{propagation identity} for a differential equation
$E[\phi]=0$, with the Killing equation and its unit variant being
examples of our primary concern. However, for our current purposes, we
must generalize the key lemma~\cite[Lem.2.1]{khgp-ckid} that allows us
to connect such a propagation identity with $E$-initial data conditions
(Section~\ref{sec:propeq} below). Then in
Section~\ref{sec:ukilling-propeq}, we derive a covariant propagation
identity for the unit Killing equations~\eqref{eq:ukilling}. Its
evolution operators turn out to be of a ADN strictly hyperbolic type, an exotic
type of hyperbolicity that we recall in Appendix~\ref{app:adn}. The level of
generality introduced in Section~\ref{sec:propeq} was needed to cover the
connection of this kind of propagation identity to initial data equations.

\subsection{Propagation argument} \label{sec:propeq}

Consider a partial differential equation $P[\psi] = 0$, where we do not
require that it is linear or is of any specific differential order.
Under such general assumptions, some field configurations $\psi$ might
be pathological, so we restrict some properties to hold only for $\psi
\in \mathcal{A}$, a set of \emph{$P$-admissible} field configurations.
We say that it has a \emph{well-posed initial value problem} with
respect to a time function $t$ if there exists an \emph{initial data}
differential operator $ID[\psi]$ such that the values of $\Psi_0 =
ID[\psi]|_\Sigma$, with $\Sigma = \{ x\in M \mid t(x) = 0 \}$ and $\psi
\in \mathcal{A}$, uniquely determine an admissible solution $\psi$ on a
neighborhood $U \supset \Sigma$. We can interpret $ID[\psi]$ in two
ways: (i) as a differential operator on $M$, or (ii) as a differential
operator $ID[(\psi)|_\Sigma]$ on $\Sigma$ acting on $(\psi)|_\Sigma =
(\psi|_\Sigma, \del_t \psi|_\Sigma, \del_t^2 \psi|_\Sigma, \ldots)$,
which includes formal time derivatives of $\psi$ up to the necessary
order. Moreover, there exists differential operators $C$ and $IC$
deduced from $P[\psi] = 0$ itself, such that $IC[ID[(\psi)|_\Sigma]] =
C[P[\psi]]|_\Sigma$ for any $\psi$, with $C[0] = 0$ (so that, in
particular, the initial data of any actual solution automatically
satisfy $IC[\Psi_0] = 0$), and, for $\Psi_0$ induced by an admissible
configuration, the \emph{initial constraint conditions} $IC[\Psi_0] = 0$
ensure that there exists an admissible solution $\psi$ satisfying
$ID[\psi]|_\Sigma = \Psi_0$ on some neighborhood $U \supset \Sigma$.
Really, $IC[\Psi_0]$ is just a rewriting of the $C[P[\psi]]$ operator,
in terms of the initial data $\Psi_0 = ID[\psi]$, which of course
presumes that $ID[\psi]$ was chosen to expose enough data of $\psi$ to
make that possible without further differentiation in time. Finally, we
call $P[\psi] = 0$ a \emph{propagation equation} when it has a
well-posed initial value problem, as above, the vanishing configuration
is admissible, $0 \in \mathcal{A}$, and also the data $\Psi_0 = 0$
satisfy $IC[0] = 0$ and give rise to the solution $\psi = 0$. Just as
$P$ itself, the operators $C$, $IC$ and $ID$ may be non-linear. It is
possible that the operator $P[\phi;\psi]$ depends on some parameter
fields $\phi$. Because of the dependence on parameters, the
$P$-admissible field configurations consist of pairs $(\phi, \psi) \in
\mathcal{A}$, with $(\phi,0)$ admissible whenever some $(\phi,\psi)$ is
admissible.

The wave equation $P[\psi] = \square \psi = 0$ on a globally hyperbolic
spacetime with a Cauchy time function $t$ provides a standard example of
what we have called a propagation equation, where $ID[\psi] = (\psi,
\del_t \psi)$, while $C[\eta] = 0$ and $IC[(\psi)|_\Sigma] = 0$. The
complexity of our definition pays off only for more exotic examples,
such as the ADN strictly hyperbolic systems that we discuss in
Appendix~\ref{app:adn}.

The following generalizes Lemma~2.1 of~\cite{khgp-ckid}:
\begin{lem} \label{lem:propeq}
Let $E[\phi] = \psi$ be a PDE (system) on a manifold $M$, defined on
some (possibly multicomponent) fields $\phi$ and $\psi$. Let $t$ be a
time function defining a hypersurface $\Sigma = \{ x\in M \mid t(x) =
0 \}$, with respect to which a PDE $Q[\phi] = 0$ has a well-posed
initial value problem, with admissible set $\mathcal{A}_Q$, and a PDE
$P[\phi;\psi] = 0$ is a propagation equation on $\psi$, with admissible
set $\mathcal{A}_P$ of pairs $(\phi,\psi)$, such that $(\phi,E[\phi]) \in \mathcal{A}_P$ if $\phi \in \mathcal{A}_Q$. Let us call a configuration
$\phi$ \emph{admissible} when $\phi \in \mathcal{A}_Q$ and $(\phi,
E[\phi]) \in \mathcal{A}_P$. Suppose that $P$ and $Q$ also satisfy the
\emph{propagation identities}
\begin{equation} \label{eq:propeq}
	P[\phi; E[\phi]] = \sigma[\phi; Q[\phi]]
	\quad \text{and} \quad
	Q[\phi] = \rho[\phi; E[\phi]] ,
\end{equation}
with $P[\phi;0] = 0$, for some differential operators
$\sigma[\phi;\chi]$ and $\rho[\phi;\psi]$ also satisfying
$\sigma[\phi;0] = 0$ and $\rho[\phi;0] = 0$. Then, the unique admissible
solution $\phi$ of $Q[\phi] = 0$ (possibly defined only on a
neighborhood $U \supset \Sigma$) determined by a valid
initial data $ID_Q[(\phi)|_\Sigma] = \Phi_0$ ($IC[\Phi_0] = 0$)
satisfies $E[\phi] = 0$ on $U$ iff the initial data
$ID_P[(E[\phi])|_\Sigma] = 0$ vanish. Moreover, every admissible
solution of $E[\phi] = 0$ on a neighborhood $U \supset \Sigma$ arises in
this way.

In addition, there exists a purely spatial PDE on $\Sigma$,
$E^\Sigma[(\phi)|_\Sigma] = 0$ such that, whenever $Q[\phi] = 0$, the
condition $E^\Sigma[(\phi)|_\Sigma] = 0$ is equivalent to the vanishing
of the initial data $ID_P[(E[\phi])|_\Sigma] = 0$.
\end{lem}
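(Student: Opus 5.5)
The plan is to follow the proof of~\cite[Lem.2.1]{khgp-ckid}, keeping track of admissibility and of the possibly non-linear constraint operators. Throughout, let $\phi$ be the unique admissible solution of $Q[\phi]=0$ on $U \supset \Sigma$ determined by the valid data $\Phi_0$, and set $\psi := E[\phi]$. Since $\phi\in\mathcal{A}_Q$, the hypothesis gives $(\phi,\psi)\in\mathcal{A}_P$. The first propagation identity in~\eqref{eq:propeq}, together with $\sigma[\phi;0]=0$, gives
\[
P[\phi;\psi] = \sigma[\phi;Q[\phi]] = \sigma[\phi;0] = 0 ,
\]
so $\psi$ is an admissible solution of the propagation equation $P[\phi;\,\cdot\,]=0$ with parameter $\phi$.

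For the direction ``vanishing data implies $E[\phi]=0$'', suppose $ID_P[(\psi)|_\Sigma]=0$. Because $P$ is a propagation equation, $(\phi,0)$ is admissible, $P[\phi;0]=0$, its data $ID_P[0]=0$ satisfy $IC_P[0]=0$, and these data give rise to the solution $0$. Both $\psi$ and $0$ are admissible solutions with the same initial data, so uniqueness in the well-posed initial value problem forces $\psi=0$ on $U$, after possibly shrinking $U$ to the common domain of uniqueness. The converse direction is immediate: if $E[\phi]=0$ on $U$, then all its time derivatives vanish on $\Sigma$, hence $ID_P[(E[\phi])|_\Sigma]=0$.

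For the statement ``every admissible solution arises this way'', take an admissible $\phi$ with $E[\phi]=0$ on $U$. The second identity and $\rho[\phi;0]=0$ give $Q[\phi]=\rho[\phi;E[\phi]]=\rho[\phi;0]=0$. Hence $\phi$ is an admissible solution of $Q$, and by the uniqueness part of well-posedness it is the one determined by its own data $\Phi_0=ID_Q[(\phi)|_\Sigma]$. These data satisfy $IC_Q[\Phi_0]=0$ automatically, since $IC_Q[ID_Q[(\phi)|_\Sigma]] = C_Q[Q[\phi]]|_\Sigma = C_Q[0]|_\Sigma = 0$.

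For the spatial PDE $E^\Sigma$, the idea is to write $ID_P[(E[\phi])|_\Sigma]$ as an operator on the jet $(\phi)|_\Sigma = (\phi|_\Sigma, \del_t\phi|_\Sigma, \del_t^2\phi|_\Sigma,\ldots)$. The time derivatives of $\phi$ of order at least the evolution order of $Q$ are then eliminated by repeatedly using $Q[\phi]=0$ and its $t$-derivatives restricted to $\Sigma$. The result is an operator on $\Sigma$ depending only on the free part of $(\phi)|_\Sigma$, essentially on $\Phi_0$, and we set $E^\Sigma$ equal to it. This gives $E^\Sigma[(\phi)|_\Sigma] = ID_P[(E[\phi])|_\Sigma]$ whenever $Q[\phi]=0$, so the two conditions vanish together. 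The main obstacle is to justify this elimination in the present general setting, where no normal form for $Q$ is assumed. The cleanest route is to define $E^\Sigma$ as the composite of $ID_P\circ E$ with the map sending the data $\Phi_0$ to the formal jet on $\Sigma$ of the unique solution of $Q$. Well-posedness says this map is determined by $\Phi_0$; I would additionally need to assume, or argue from $Q$ being a differential operator, that it is local, i.e.\ itself a differential operator on $\Sigma$. That locality is the point I would take the most care over.
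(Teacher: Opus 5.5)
Your first part ($P[\phi;E[\phi]]=\sigma[\phi;0]=0$, then uniqueness for the propagation equation) and your ``every solution arises this way'' part ($\rho[\phi;0]=0$ and $IC_Q[\Phi_0]=C_Q[0]|_\Sigma=0$) follow the paper's argument. One small precision: $ID_P$ may be non-linear, so the converse step ``$E[\phi]=0$ hence $ID_P[(E[\phi])|_\Sigma]=0$'' relies on $ID_P[0]=0$. That comes from the propagation-equation hypothesis that zero data give rise to the zero solution, which is how the paper phrases it.

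The part that does not close is the construction of $E^\Sigma$. You try to express $ID_P[(E[\phi])|_\Sigma]$ purely in terms of the free data $\Phi_0$. You then correctly note that, under the abstract well-posedness hypotheses, nothing guarantees that the map from $\Phi_0$ to the formal jet of the $Q$-solution is a differential operator on $\Sigma$. That obstacle is self-imposed.

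In the paper's conventions, $(\phi)|_\Sigma=(\phi|_\Sigma,\partial_t\phi|_\Sigma,\partial_t^2\phi|_\Sigma,\ldots)$ already contains all formal time derivatives as independent fields on $\Sigma$. The lemma only asks for a spatial PDE in these fields. Since $E$ is a differential operator on $M$, each $\partial_t^N E[\phi]|_\Sigma$ is a spatial differential operator in finitely many of the $\partial_t^k\phi|_\Sigma$. Composing with $ID_P$, read as an operator on $(\psi)|_\Sigma$, gives $E^\Sigma[(\phi)|_\Sigma]:=ID_P[(E[\phi])|_\Sigma]$. This is a spatial PDE tautologically equivalent to the vanishing of the $P$-data, for every $\phi$ and not only for solutions of $Q$. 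That is exactly the paper's proof.

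Using $Q[\phi]=0$ and its formal time derivatives to eliminate components of $(\phi)|_\Sigma$ is then only an optional simplification. It can be carried out as far as the structure of $Q$ allows, for instance by solving for top time derivatives as with $\hat{\square}$. It never requires the locality of the data-to-jet map that you were worried about.
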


\begin{proof}
We closely follow the logic of the proofs of Lemmas 2.1--2
of~\cite{khgp-ckid}, with adjustments for the more general hypotheses.

Let $\phi$ be the admissible solution of $Q[\phi] = 0$ on $U \supset
\Sigma$ determined by the valid initial data $\Phi_0$, and let $\psi =
E[\phi]$, so that the pair $(\phi,\psi)$ is $P$-admissible and, since
$P$ is a propagation equation, so is $(\phi,0)$. The first propagation
identity in~\eqref{eq:propeq} then
implies $P[\phi; \psi] = \sigma[\phi; Q[\phi]] = \sigma[\phi; 0] = 0$,
so that $\psi$ solves the propagation equation $P[\phi; \psi] = 0$.
Hence, using that $P[\phi;0] = 0$, if $\Psi_0 = ID_P[(\psi)|_\Sigma] =
0$, these initial data are automatically valid ($IC_P[\Psi_0] = IC_P[0]
= 0$) and, both $\psi$ and $0$ being admissible solutions with the same
initial data, the uniqueness part of $P$-well-posedness gives $E[\phi]
= \psi = 0$ on $U$. Conversely, of course, when $\psi = E[\phi] = 0$,
again by the uniqueness part of $P$-well-posedness it must have
$ID_P[\psi] = 0$.

For the second part, note that each formal time derivative $\del_t^N
E[\phi]|_\Sigma$ is itself a purely spatial differential operator
applied to $(\phi)|_\Sigma$. Hence, interpreting $ID_P$ as a
differential operator on $\Sigma$, the composition
$E^\Sigma[(\phi)|_\Sigma] := ID_P[(E[\phi])|_\Sigma]$ defines a purely
spatial PDE on $\Sigma$, tautologically equivalent to the vanishing of
the initial data $ID_P[(E[\phi])|_\Sigma] = 0$. Of course, whenever
$Q[\phi] = 0$, the equation itself and its formal time derivatives
restricted to $\Sigma$ may be used to eliminate some of the components
of $(\phi)|_\Sigma$ from $E^\Sigma$, replacing the tautological choice
above by a simplified but, modulo $Q[\phi] = 0$, equivalent one.

Finally, the remaining implication needed to prove the claim of the first
paragraph of the lemma follows from the second propagation identity
in~\eqref{eq:propeq}. Namely, any admissible solution of $E[\phi]
= 0$ satisfies $Q[\phi] = \rho[\phi; E[\phi]] = \rho[\phi; 0] = 0$,
with its own initial data $\Phi_0 = ID_Q[(\phi)|_\Sigma]$
automatically valid, $IC_Q[\Phi_0] = C_Q[Q[\phi]]|_\Sigma =
C_Q[0]|_\Sigma = 0$, so that, being $Q$-admissible, it coincides with
the unique admissible solution of $Q[\phi] = 0$ determined by $\Phi_0$.
\end{proof}

\subsection{Propagation identity for unit Killing equations} \label{sec:ukilling-propeq}

In this section, we derive a propagation identity for the unit Killing
system~\eqref{eq:ukilling} to which the propagation argument
(Lemma~\ref{lem:propeq}) is applicable. We only discuss that version
where $u_a$ is unit timelike; the spacelike version follows from the
substitution $u_a \mapsto i u_a$ discussed earlier.
Our starting point is the propagation identity form the Killing equation
$K_{ab}[v] = 0$ ($E[\phi] = 0$)~\cite[Sec.2.1]{khgp-ckid}:
\begin{align}
\label{eq:k-propeq}
	\square K_{ab}[v] - 2R^c{}_{ab}{}^d K_{cd}[v]
		&= K_{ab}[Q [v]] + 2 R_{(a}{}^c K_{b)c}[v] - 2 \mathcal{L}_v R_{ab} , \\
		& {}\hspace{5em} (P[E[\phi]] = \sigma[Q[\phi]] + \cdots) \notag
	\\
\label{eq:k-propeq-conv}
	Q_a [v] := \square v_a + R_a{}^b v_b
		&= \nabla^b K_{ab}[v] - \frac{1}{2} \nabla_a K^b{}_b[v] , \\
		& {}\hspace{5em} (Q[\phi] = \rho[E[\phi]]) \notag
\end{align}
where the Lie derivative expands to $\mathcal{L}_v R_{ab} = v^c \nabla_c
R_{ab} + 2 R_{c(a} \nabla_{b)} v^c$, and $\cdots$ refers to terms that
do not fit into either the $P$- or $Q$-expressions unless (cosmological)
vacuum Einstein equations hold. If they do hold, the $\cdots$ terms are
then absorbed into the $P$-expression. Their precise off-shell structure
becomes important when extending the propagation identity to Einstein
equations coupled with matter fields~\cite{racz-kid1, racz-kid2}, which
motivates us to record these terms explicitly, rather than simply
dropping them.

The derivation will proceed as follows. Noting that the equalities
in~\eqref{eq:k-propeq} and~\eqref{eq:k-propeq-conv} hold for arbitrary
$v_a$, we will introduce the parametrization $v_a = \la u_a$ and
systematically cancel the $\la$-dependence from both sides. However, we
must take care not to destroy the desired factorization structure of the
propagation identities, namely that some terms must explicitly factor through $K_{ab}[v]$ and hence through $A_a[\la,u]$~\eqref{eq:A} and
$\bar{K}_{ab}[u]$~\eqref{eq:spK}. Rearranging~\eqref{eq:A} into
$\nabla_a \ln\la = \dot{u}_a - A_a$, after an overall rescaling by
$\lambda$, all terms that dependent on $\la$ and its derivatives will
depend on it through $A_a$. The dependence on $\la$ can then be safely
cancelled without destroying the factorization property by cancelling
only terms that wholly factor through $A_a$. Care must be taken with
higher derivatives. For instance, by~\eqref{eq:B},
\begin{equation}
	\nabla_a A_b = \frac{1}{2} B_{ab} + \frac{1}{2} K_{ab}[A] ,
\end{equation}
where $B_{ab}$ no longer depends on $\la$ and hence need not participate
in the cancellations, while the principal term in $K_{ab}[A]$ is
$\nabla_{(a} \nabla_{b)} \la$ and so cannot be decomposed further. Only
a few second order derivative terms appear in the calculation, and we
obtain the analogous but more complicated identities for the relevant
ones, $\nabla_a \nabla_c A^c$ and $\square A_a$, by contracting the following identity:
\begin{align}
	\notag
	\nabla_a \nabla_b A_c
	&= \nabla_{(a} \nabla_b A_{c)} + \frac{2}{3} \nabla_{(a} B_{b)c}
		+ \frac{2}{3} R_{a(bc)d} A^d
	\\ \notag
	&= \nabla_{(a} \nabla_b A_{c)} + \frac{2}{3} \nabla_{(a} B_{b)c}
	\\ \notag & \quad
	+ \frac{2}{3} (\rho_{a(b:c)d} + \sigma_{a(b:c)} u_d + u_{(b} \sigma_{c)a:d} + u_{(b} \sigma_{c)d:a} + u_a \sigma_{d(b:c)} \notag
	\\ & \qquad
		+ \tau_{a:(b} u_{c)} u_d - \tau_{a:d} u_b u_c - \tau_{b:c} u_a u_d + \tau_{d:(b} u_{c)} u_a) A^d
	, \label{eq:nabla-nabla-A}
\end{align}
where for convenience, we have decomposed the curvature tensor as
\begin{multline}
	R_{abcd}
	= \rho_{ab:cd}
	+ 2(\sigma_{ab:[c} u_{d]} + \sigma_{cd:[a} u_{b]})
	\\
	+ (\tau_{a:c} u_b u_d - \tau_{b:c} u_a u_d
		- \tau_{a:d} u_b u_c + \tau_{b:d} u_a u_c) .
\end{multline}
Here $\rho_{ab:cd}$ has Riemann symmetries (Young type (2,2)), while
$\sigma_{[ab]:c} = \sigma_{ab:c}$ and $\sigma_{[ab:c]} = 0$ (Young type
(2,1)), and $\tau_{(a:b)} = \tau_{a:b}$ (Young type (2)), with each of
these tensors orthogonal to $u_a$ in every index. (The $:$ is a purely
visual separator between convenient index groups.) The individual pieces
can be extracted with the following contractions:
\begin{align*}
	\tau_{a:c} &= R_{abcd} u^b u^d , \\
	\sigma_{ab:c} &= -R_{abcd} u^d - 2\tau_{[a|:c} u_{|b]}
		= -\gamma_a^{a'} \gamma_b^{b'} \gamma_c^{c'} R_{a'b'c'd} u^d , \\
	\rho_{ab:cd} &= R_{abcd}
		- 2(\sigma_{ab:[c} u_{d]} + \sigma_{cd:[a} u_{b]})
	\\ & \quad {}
		- (\tau_{a:c} u_b u_d - \tau_{b:c} u_a u_d
		 - \tau_{a:d} u_b u_c + \tau_{b:d} u_a u_c)
	\\
	&= \gamma_a^{a'} \gamma_b^{b'} \gamma_c^{c'} \gamma_d^{d'} R_{a'b'c'd'} .
\end{align*}
For reference, the $\La$-vacuum Einstein equations decompose as
\begin{equation} \label{eq:lambda-vacuum}
	R_{ab} - \frac{2\La}{n-2} g_{ab}
	= \left(\rho_{ac:b}{}^b - \tau_{a:b} - \frac{2\La}{n-2} \gamma_{ab}\right)
		- 2 u_{(a} \sigma_{b)c:}{}^c
		+ u_a u_{b} \left(\tau_{c:}{}^c + \frac{2\La}{n-2}\right)
	.
\end{equation}

To further manage the complexity of the calculations, we will use
additional constructions. We will need to systematically decompose all
equations into sectors parallel and orthogonal to $u_a$, for which we
introduce the obvious notation $T^*M \cong T_u^*M \oplus T_{\perp}^*M$.
Although $T_{\perp}^*M$ can be identified with a sub-bundle orthogonal
to a unit vector $u_a$, and such an embedding depends on $u_a$, the
images of all such embeddings are isomorphic as smooth bundles,
isomorphic to the abstract bundle $T_{\perp}^*M$. For future
convenience, let us also use the notation $\mathrm{u}T^*M \subset T^*M$
for the non-linear sub-bundle of time-like unit normalized vectors. Next,
we introduce a new \emph{adapted} affine connection $\D_a$, which
preserves not only the metric but also this splitting. It is defined by
\begin{equation} \label{eq:D}
\begin{aligned}
	\D_a S_b = \nabla_a S_b + u^c (\nabla_a u_b) S_c - u_b (\nabla_a u^c) S_c ,
	\\
	\D_a T^b = \nabla_a T^b - u^b (\nabla_a u_c) T^c + u_c (\nabla_a u^b) T^c ,
\end{aligned}
\end{equation}
and the action $\D_a U = \nabla_a U$ on scalars; its extension to higher
rank mixed tensors is uniquely determined by the Leibniz rule. By direct
calculation, it has the promised compatibility properties
\begin{equation}
	\nabla_a (S_b T^b) = (\D_a S_b) T^b + S_b (\D_a T^b) ,
	\quad
	\D_a g_{ab} = 0 ,
	\quad
	\D_a u_b = 0 , \\
	\quad
	\D_a u^b = 0 .
\end{equation}
On the other hand, $\D_a$ is not symmetric (it has torsion). 

We first demonstrate how the cancellation method works on the simplest
case, the factorization of the $Q[\phi]$ operator through the Killing
system~\eqref{eq:k-propeq-conv}. Multiplying through by $\la^{-1}$, the
left-hand side of the equality becomes
\begin{multline} \label{eq:Q-lhs-A}
	\la^{-1} (\square (\la u_a) + \la R_a{}^b u_b)
	= (\D^c + 3\dot{u}^c) \nabla_c u_a
		- (\nabla^c u_c) \dot{u}_a + \sigma_{ab:}{}^b
	\\
		+ u_a [\D^c \dot{u}_c + 2\dot{u}^c \dot{u}_c + (\nabla^c u^b) (\nabla_c u_b)
		- \tau_{c:}{}^c]
	\\
		- \frac{1}{2} u_a K_c{}^c[A]
		- A^c [2\nabla_c + 2\dot{u}_c - A_c] u_a
\end{multline}
The adapted connection $\D_a$ appears after using~\eqref{eq:D} to
re-express each covariant derivative $\nabla_a$ in terms of it (with
$\nabla_a u_b$ itself being the only exception, since $\D_a u_b = 0$).
By the normalization of $u_a$, we have that $\nabla_c u_a$ and hence
$\dot{u}_a$ are orthogonal to $u_a$ in the $a$ index. The use of $\D_a$
then allows to easily keep track of which terms are parallel to $u_a$
and which ones are orthogonal.
The right-hand side of the equality in~\eqref{eq:k-propeq-conv} becomes
\begin{multline} \label{eq:Q-rhs-A}
	\la^{-1} \nabla^c (K_{ca} - \tfrac{1}{2} g_{ca} K_b^b)
	\\
	= (\D^c + \dot{u}^c)
		[\bar{K}_{ca} - \tfrac{1}{2} g_{ca} \bar{K}_b{}^b]
		+ \dot{u}^c \bar{K}_{ca}
		- u^c B_{ca}
		+ u_a (\nabla^b u^c) \bar{K}_{bc}
	\\
		- \frac{1}{2} u_{a} K_c{}^c[A]
		- A^c [2\nabla_c + 2 \dot{u}_c - A_c] u_a
	.
\end{multline}
It is evident that the $\la$-dependence cancels between the left-hand
side~\eqref{eq:Q-lhs-A} and the right-hand side~\eqref{eq:Q-rhs-A}
through identical $A$-dependent terms (the $B$-term does not count,
since it no longer depends on $\la$).

Thus, after cancelling the $\la$-dependence and decomposing it into
$T_{\perp}^*M$ and $T_u^*M$ parts, equality~\eqref{eq:k-propeq-conv}
becomes the following two equalities
\begin{align}
	\notag
	\MoveEqLeft[4]
	\hat{\square} u_a
	:= (\D^c + 3\dot{u}^c) \nabla_c u_a
		- (\nabla^c u_c) \dot{u}_a + \sigma_{ab:}{}^b
	\\ \label{eq:boxhat}
	&= (\D^c + \dot{u}^c)
		[\bar{K}_{ca} - \tfrac{1}{2} \gamma_{ca} \bar{K}_b{}^b]
		+ \dot{u}^c \bar{K}_{ca}
		- u^c B_{ca}
	,
	\\ \notag
	\MoveEqLeft[4]
	\D^c \dot{u}_c + 2\dot{u}^c \dot{u}_c + (\nabla^c u^b) (\nabla_c u_b) 
		- \tau_{c:}{}^c
	\\ \label{eq:boxparl}
	&= \tfrac{1}{2} \dot{\bar{K}}_b{}^b + (\nabla^b u^c) \bar{K}_{bc} 
	.
\end{align}
We have defined the non-linear second order differential operator
$\hat{\square} \colon \Secs(\mathrm{u}T^*M) \to \Secs(T^*_{\perp} M)$ as
the left-hand side of~\eqref{eq:boxhat}. It is determined (the fibers of
$\mathrm{u}T^*M$ and $T_{\perp}^*M$ have the same dimension) and its
principal term (obtained after expanding the adapted derivatives) is
essentially the wave operator $\square$, up to a $u$-dependent embedding
$T_{\perp}^*M \to T^*M$. Later, we shall see that $\hat{\square} u_a$
will play the role of $Q[\phi]$ for our unit Killing system. The
well-posedness properties of the equation $\hat{\square} u_a = 0$ are
postponed until the end of this section.

Now, we expand the identity~\eqref{eq:k-propeq}. Taking advantage
of~\eqref{eq:boxhat} and~\eqref{eq:boxparl} and already hiding all the
$\la$-dependent $A$-terms (knowing that they will cancel), the first term on
the right-hand side becomes (where we denote $(R\cdot u)_a = R_a{}^b
u_b$)%
\begin{align}
	\notag
	\la^{-1} K_{ab}[\square \la u + \la R \cdot u] 
	&= 2(\D + \dot{u})_{(a} \hat{\square} u_{b)} + 2 (u_{(a} \nabla_{b)} u^c) \hat{\square} u_{c}
	\\ \notag & \quad {}
		+ \bar{K}_{ab} [\tfrac{1}{2} \dot{\bar{K}}_c{}^c + (\nabla^c u^d) \bar{K}_{cd}]
	\\ \notag & \quad {}
		+ 2 u_{(a} \nabla_{b)} [\tfrac{1}{2} \dot{\bar{K}}_c{}^c + (\nabla^c u^d) \bar{K}_{cd}]
	\\ \notag & \quad {}
		- 2 B_{(a|}{}^c \nabla_c u_{|b)} - \tfrac{2}{3} u_{(a} (\nabla^c + 3 \dot{u}^c) B_{b)c}
	\\ \label{eq:KB-propeq-Q} & \quad {}
		+ O(A)
	.
\end{align}
The $\cdots$ terms from the right-hand side of~\eqref{eq:k-propeq} are
\begin{multline} \label{eq:KB-propeq-ricci}
	\la^{-1} [2 R_{(a}{}^c K_{b)c}[\la u] - 2 \la u^c \nabla_c R_{ab} - 4 R_{c(a} (\nabla_{b)} \la u^c)]
	\\
	= -2 (\dot{\rho}_{ad:b}{}^d - \dot{\tau}_{a:b}) + 4 u_{(a}\dot{\sigma}_{b)c:}{}^c - 2 u_a u_b \dot{\tau}_{c:}{}^c
	\\
		+ 2(\bar{K}^c{}_{(a} - 2 \nabla_{(a}u^c) (\rho_{b)d:c}{}^d - u_{b)}\sigma_{cd:}{}^d - \tau_{b):c})
		+ O(A)
	.
\end{multline}
Finally, we look at the left-hand side of~\eqref{eq:k-propeq}, the propagation operator for $K_{ab}$, which becomes:
\begin{align}
	\notag
	\MoveEqLeft
	\la^{-1} \square K_{ab}[\la u] - 2 \la^{-1} R^c{}_{ab}{}^d K_{cd}[\la u]
	\\ \notag
	&= (\D^c + \dot{u}^c)(\D_c + \dot{u}_c)\bar{K}_{ab} + (\dot{u}^c\dot{u}_c + \dot{u}^c \D_c)\bar{K}_{ab}
	\\ \notag & \quad {}
		+ 2\bar{B}_{(a|}{}^c \nabla_c u_{|b)} - 2 \dot{u}_{(a} C_{b)} + 2 \bar{K}_{(a|d}(\nabla_c u^d)(\nabla^c u_{|b)})
	\\ \notag & \quad {}
		- u^c [2(\D_d \bar{K}_{(a|c})(\nabla^d u_{|b)}) + (\nabla^d u_d)(\D_c \bar{K}_{ab})]
	\\ \notag & \quad {}
		+2 u_{(a} \big\{ \tfrac{5}{3} \dot{u}^c B_{b)c} + \tfrac{2}{3} \D_c B_{|b)}{}^c
	\\ \notag & \qquad {}
		+ \bar{K}_{|b)c} (\D_d \nabla^d u^c) + C^c \nabla_c u_{|b)} + 3\bar{K}_{|b)d} \dot{u}^c \nabla_c u^d
	\\ \notag & \qquad {}
		- (\nabla^d u_d)[\tfrac{2}{3} B_{|b)c} u^c + \bar{K}_{|b)c} \dot{u}^c] - \tfrac{2}{3} B_{cd} u^c \nabla^d u_{|b)} + 2(\D_d \bar{K}_{|b)c}) \nabla^d u^c
	\\ \notag & \qquad {}
		+\tfrac{1}{2} u_{|b)}[\tfrac{4}{3} B_{cd} \nabla^d u^c + 2\bar{K}_{ce}(\nabla_d u^e)(\nabla^d u^c)] \big\}
	\\ \label{eq:KB-propeq-P} & \quad {}
		+ 2 \bar{K}^{cd} (\rho_{ac:bd} - 2 u_{(b} \sigma_{a)c:d} + u_a u_b \tau_{cd})
		+ O(A)
	,
\end{align}
where we have decomposed $B_{ab} = \bar{B}_{ab} + (u \wedge
C)_{ab} = \bar{B}_{ab} + u_a C_b - u_b C_a$, with $u^a \bar{B}_{ab} =
0$, $u^a C_a = 0$ and $C_a = B_{ab} u^b$.

Taking the expression from~\eqref{eq:KB-propeq-P} and setting it equal
to the sum of the expressions from~\eqref{eq:KB-propeq-Q}
and~\eqref{eq:KB-propeq-ricci}, dropping all the $O(A)$ terms, we obtain
an identity that is almost the desired propagation identity for our unit
Killing system. To get it to the desired form, first all the terms from
the right-hand side that neither factor through $\hat{\square} u_a$, nor
belong to the $\cdots$ terms must be moved to the left-hand side. Then,
both sides of the identity should be decomposed into the part that can
be written as $u_{(a} (-)_{b)}$ and the purely
$u$-orthogonal part, which respectively are
\begin{align}
	\notag
	\MoveEqLeft
	(\nabla^c + 2 \dot{u}^c) B_{bc}
		+ 2 C^c \nabla_c u_{b}
	\\ \notag & \quad {}
		- \nabla_{b} [\tfrac{1}{2} \dot{\bar{K}}_c{}^c + (\nabla^c u^d) \bar{K}_{cd}]
		+ 2(\D_d \bar{K}_{bc}) \nabla^d u^c
	\\ \notag & \quad {}
		+ \bar{K}_{bd} (\D^c + 3 \dot{u}^c) (\nabla_c u^d)
		- \bar{K}^{cd} \sigma_{bc:d}
		- (\nabla^d u_d)\bar{K}_{bc} \dot{u}^c
	\\ \notag & \quad {}
		+ u_{b} \big[\bar{K}_{ce}(\nabla_d u^e)(\nabla^d u^c) + \bar{K}^{cd} \tau_{cd}\big]
	\\
	\label{eq:B-propeq1}
	&= -u^c \D_c \hat{\square} u_{b}
		+ (\nabla_{b} u^c) \hat{\square} u_{c}
	\\ \notag & \qquad {}
		+2 [ u^d \D_d \sigma_{bc:}{}^c - u_b (u^d \D_d \tau_c{}^c + \dot{u}^c \sigma_{cd:}{}^d)  \big]
\end{align}
and
\begin{align}
	\notag
	\MoveEqLeft
	(\D^c + \dot{u}^c)(\D_c + \dot{u}_c)\bar{K}_{ab} + (\dot{u}^c\dot{u}_c + \dot{u}^c \D_c)\bar{K}_{ab}
	\\ \notag & \quad {}
		- u^c [2(\D_d \bar{K}_{(a|c})(\nabla^d u_{|b)}) + (\nabla^d u_d)(\D_c \bar{K}_{ab})]
	\\ \notag & \quad {}
		- \bar{K}_{ab} [\tfrac{1}{2} \dot{\bar{K}}_c^c + (\nabla^c u^d) \bar{K}_{cd}]
		+ 2 \bar{K}_{(a|d}(\nabla_c u^d)(\nabla^c u_{|b)})
		+ 2 \bar{K}^{cd} \rho_{ac:bd}
	\\ \notag & \quad {}
		+ 4 \bar{B}_{(a|}{}^c \nabla_c u_{|b)}
		- 4 \dot{u}_{(a} C_{b)}
	\\
	\label{eq:spK-propeq}
	&= 2 (\D_{(a} + \dot{u}_{(a} + u_{(a} u^c \D_{|c|}) \hat{\square} u_{b)} - 2 u^c \D_c(\rho_{ad:b}{}^d - \tau_{a:b})
	\\ \notag & \qquad {}
		+ 2 (\bar{K}^c{}_{(a} - 2\nabla_{(a} u^c - 2 u_{(a} \dot{u}^c)(\rho_{b)d:c}{}^d - u_{b)} \sigma_{cd:}{}^d - \tau_{b):c})
	.
\end{align}
It is worth remarking that without using the adapted connection $\D_a$,
it would have been less than obvious that the last equation is valued in
${T^*}^{\otimes 2} M$. Expanding the adapted derivatives, the principal
term on the left-hand side of~\eqref{eq:spK-propeq} is just
$\gamma_a^{a'} \gamma_b^{b'}\square \bar{K}_{a'b'}$, which looks
promising for identifying a well-posed evolution operator for
$\bar{K}_{ab}$ and $\bar{B}_{ab}$. Unfortunately, the
equation~\eqref{eq:B-propeq1} is only vector valued (not 2-form valued),
not to mention that its principal term $-\tfrac{1}{2} \nabla_b
\dot{\bar{K}}_c{}^c$ does not even depend on $B_{ab}$, which does not
look helpful for an equation that should evolve $B_{ab}$. However,
taking an exterior derivative of both sides of~\eqref{eq:B-propeq1}, we
find the more promising
\begin{align}
	\notag
	\MoveEqLeft
	-\square_{dR} B_{ab} + 4 (\nabla_{[a} \dot{u}^c) B_{b]c}
		+ 4 \nabla_{[a|} C^c \nabla_c u_{|b]} - 2\nabla_{[a} \bar{K}^{cd} \sigma_{b]:cd}
	\\ \notag & \quad {}
		+ 2 \nabla_{[a|} [(\D^c + 3 \dot{u}^c - (\nabla^e u_e) u^c) (\nabla_c u^d)
			+ 2 (\nabla_c u^d) \D^c] \bar{K}_{|b]d}
	\\ \notag & \quad {}
		+ 2 \nabla_{[a} u_{b]} [(\nabla^d u^c) (\nabla_d u^e) \bar{K}_{ce} + \bar{K}^{cd} \tau_{c:d}]
	\\
	\label{eq:B-propeq}
	&= - 2 \nabla_{[a|} [-u^c \D_c \hat{\square} u
			+ (\nabla u^c) \hat{\square} u_c]_{|b]}
	\\ \notag & \quad {}
		+ 4 \nabla_{[a|} [ u^d \D_d \sigma_{|b]c:}{}^c - u_{|b]} (u^d \D_d \tau_c{}^c + \dot{u}^c \sigma_{cd:}{}^d)  \big]
	,
\end{align}
where
\[
	\square_{dR} B_{ab}
	= (\d \delta B + \delta \d B)_{ab}
	= -2\nabla_{[a} \nabla^c B_{b]c} + 2\nabla^c \nabla_{[c} B_{ab]}
\]
is the Hodge-de~Rham d'Alembertian on $2$-forms, and we have taken
advantage of the fact that definition~\eqref{eq:B} implies that
$\nabla_{[c} B_{ab]} = 0$. The terms on the right-hand side
of~\eqref{eq:B-propeq} that do not factor through $\hat{\square} u_a$
clearly vanish for $\Lambda$-vacua~\eqref{eq:lambda-vacuum}.
Note that the $\nabla_b \dot{\bar{K}}_c^{}c$ from the left-hand side
of~\eqref{eq:B-propeq1} was killed by taking an
exterior derivative. Otherwise it would have been a dangerous term that
added a dependence on $\bar{K}_{ab}$ at 3rd differential order to the
$B$-propagation identity. However, the fact that the $B$-propagation
identity depends on $\bar{K}_{ab}$ at 2nd differential order is still
unusual and it is hard to match a simple notion of hyperbolicity to the
left-hand sides of~\eqref{eq:spK-propeq} and~\eqref{eq:B-propeq}.

Fortunately, with the help from a more exotic kind of hyperbolicity
discussed in Appendix~\ref{app:adn}, we can still conclude
\begin{thm} \label{thm:ukilling-propeq}
When $(M,g_{ab})$ is globally hyperbolic and satisfies the
(cosmological) vacuum Einstein equations, there exists a propagation
identity
\begin{equation}
	\mathcal{P}[u; \bar{K}[u], B[u]] = \sigma[u; \mathcal{Q}[u]] ,
	\quad
	\mathcal{Q}[u] = \rho[u; \bar{K}[u], B[u]]
\end{equation}
for the unit Killing system~\eqref{eq:ukilling}, satisfying the
hypotheses of Lemma~\ref{lem:propeq}, where $\mathcal{Q}[u] =
\hat{\square} u_a$ and $\mathcal{P}$ and $\sigma$ can be read off
from~\eqref{eq:spK-propeq} and~\eqref{eq:B-propeq}, and $\rho$
from~\eqref{eq:boxhat}.
\end{thm}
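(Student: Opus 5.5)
The plan is to verify, one at a time, the hypotheses of Lemma~\ref{lem:propeq} with $\phi = u_a \in \Secs(\mathrm{u}T^*M)$, $E[\phi] = \psi = (\bar{K}_{ab}[u], B_{ab}[u])$, $Q = \hat{\square}$ and $P$ given by the left-hand sides of~\eqref{eq:spK-propeq} and~\eqref{eq:B-propeq}. The algebraic part is essentially already done above. The identity $\mathcal{Q}[u] = \rho[u;\bar{K},B]$ is exactly~\eqref{eq:boxhat}, and its right-hand side is visibly linear in $(\bar{K},B)$, so $\rho[u;0]=0$. The identity $\mathcal{P}[u;\bar{K},B] = \sigma[u;\hat{\square} u]$ is obtained by moving everything in~\eqref{eq:spK-propeq} and~\eqref{eq:B-propeq} that does not factor through $\hat{\square} u_a$ to the left. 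Under the $\La$-vacuum equations, \eqref{eq:lambda-vacuum} gives $\sigma_{bc:}{}^c = 0$, $\tau_{c:}{}^c$ constant, and $\rho_{ad:b}{}^d - \tau_{a:b} \propto \gamma_{ab}$ with constant coefficient. I will check that the leftover curvature terms then either vanish or combine into terms already proportional to $\bar{K}$; the latter happens, for instance, with $-4\nabla_{(a}u^c\,\gamma_{b)c}$-type contractions, which equal $-2\bar{K}_{ab}$ modulo $u$-parallel pieces. After this step $\sigma[u;\chi]$ is a linear differential operator in $\chi$, so $\sigma[u;0]=0$, and $\mathcal{P}[u;0,0]=0$ because every term of $\mathcal{P}$ contains $\bar{K}$, $B$, $\bar{B}$ or $C$.

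Next I will establish the analytic hypotheses, which is the real content. For $\mathcal{Q}$, I expand the adapted derivatives in $\hat{\square} u_a$ and check that its principal part is $\gamma_a{}^{b}\square u_b$ acting on the $(n-1)$-dimensional fibre of $\mathrm{u}T^*M$. Equivalently, I will parametrize $u$ locally by its orthogonal projection relative to a fixed timelike reference (e.g.\ $n_a$), so that the equation becomes a quasilinear system diagonal in the wave operator. Standard theory then gives local well-posedness near $\Sigma$ with data $(u,\nabla_0 u)|_\Sigma$, admissible set the unit timelike $u$, and no constraints, so $C = IC = 0$. For $\mathcal{P}$, regarded as a linear system in $(\bar{K}_{ab},B_{ab})$ with coefficients depending on $u$, I will assign ADN weights. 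The principal part is $\gamma\gamma\square\bar{K}$ in the $\bar{K}$-row and $-\square_{dR}B$ plus second-order $\nabla\D\bar{K}$ couplings in the $B$-row. With weights $s_{\bar K}=0$, $s_B=0$, $t_{\bar K}=t_B=2$ the principal symbol is block-triangular, with diagonal blocks given by the wave symbol. I then invoke the Appendix~\ref{app:adn} results on ADN strict hyperbolicity to get existence and uniqueness, with $ID_P = (\bar{K},\nabla_0\bar{K},B,\nabla_0 B)|_\Sigma$.

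A subtlety is that $B$ is constrained by $\d B=0$, and $\square_{dR}$ only evolves it correctly on closed forms. I will include $\nabla_{[c}B_{ab]}=0$ in the admissible set, or in $IC$, noting that it is preserved because $B = \d\dot u$ identically. The zero configuration is trivially admissible and solves $\mathcal{P}=0$ with zero data, which makes $\mathcal{P}$ a propagation equation. Finally, $(u,E[u])\in\mathcal{A}_P$ whenever $u\in\mathcal{A}_Q$, since $B[u]$ is exact.

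I expect the main obstacle to be the hyperbolicity of $\mathcal{P}$. The $B$-equation contains $\bar{K}$ at second order through $\nabla_{[a}\D^c\bar{K}_{b]d}\nabla_c u^d$-type terms and $\nabla_{[a}(\nabla_c u^d)\D^c\bar{K}_{b]d}$, so the triangular structure of the symbol must be verified carefully with the chosen weights. Global hyperbolicity of $(M,g)$ is used to obtain domain-of-dependence uniqueness, so that uniqueness holds on a neighbourhood $U\supset\Sigma$.
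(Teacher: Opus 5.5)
\textbf{Verdict: there is a genuine gap, in the well-posedness of $\mathcal{P}$.} Your algebraic part agrees with the paper: $\rho$ is read off from \eqref{eq:boxhat}, the curvature terms in \eqref{eq:spK-propeq} and \eqref{eq:B-propeq} vanish or recombine into $\bar K_{ab}$ under \eqref{eq:lambda-vacuum}, and $\hat\square u_a=0$ is a wave equation. The problem is the weights. With yours (order bound $2$ for every block), the weighted symbol is only block lower-triangular. The reason is that $\mathcal{P}_2$ contains $\bar K$ at second order with coefficient $\nabla u$, e.g.\ $4(\nabla_c u^d)\nabla_{[a}\D^c\bar K_{b]d}$. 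Definition~\ref{dfn:strictly-hyper}(b), and hence Propositions~\ref{prp:adn-well-posed} and~\ref{prp:adn-global-well-posed}, require a block \emph{diagonal} weighted symbol. With a nonzero $(2,1)$ block, the refinement to $1\times 1$ scalar wave blocks is unavailable, so ``invoke the Appendix'' does not apply. Triangularity with $\square$ on the diagonal is also not sufficient in itself. On $\mathbb{R}^{1+1}$, the system $\square w_1+\partial_x w_2=0$, $\square w_2+\partial_x^2 w_1=0$ has exactly this symbol structure under your weights. Yet it has Fourier modes growing like $e^{c\sqrt{|k|}\,t}$, so it is not $C^\infty$ well-posed. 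No integer weights can decouple it either: one would need $0<t_1-t_2<1$.

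\textbf{The missing idea.} What rescues the actual system, and what your argument never uses, is that $\mathcal{P}_1$ sees $B$ only at order $0$, through $\bar B_{(a}{}^c\nabla_c u_{b)}$ and $\dot u_{(a}C_{b)}$. You need unequal weights to exploit this: $(t_1,t_2)=(3,2)$ for $(\bar K,B)$ and $(s_1,s_2)=(1,0)$ for $(\mathcal{P}_1,\mathcal{P}_2)$. The block order bounds become $\bigl(\begin{smallmatrix}2&1\\3&2\end{smallmatrix}\bigr)$. Then $B$ enters $\mathcal{P}_1$ at order $0<1$ and $\bar K$ enters $\mathcal{P}_2$ at order $2<3$, so both off-diagonal weighted symbols vanish. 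The symbol becomes $p_ap_bg^{ab}$ times $\mathrm{diag}(\mathrm{id},-\mathrm{id})$, up to sign. This is also where the exterior derivative taken from \eqref{eq:B-propeq1} to \eqref{eq:B-propeq} matters: $\nabla_b\dot{\bar K}_c{}^c$ would have order $3=t_1-s_2$. Consequently the initial data are $(\bar K,\partial_t\bar K,\partial_t^2\bar K;B,\partial_tB)|_\Sigma$ with the constraint $\mathcal{P}_1|_\Sigma=0$, not your unconstrained $(\bar K,\nabla_0\bar K,B,\nabla_0B)$.

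\textbf{A smaller point.} Your worry about $\d B=0$ is misplaced. $\square_{dR}$ has principal part $\mp\square$ on all $2$-forms, and closedness was used only to rewrite $\mathcal{P}_2$ along $B=B[u]$, where it holds identically. So $\mathcal{A}_{\mathcal{P}}$ needs no condition on $(\bar K,B)$. Imposing $\d B=0$ there would instead oblige you to show that arbitrary solutions of $\mathcal{P}=0$ remain closed, and ``$B=\d\dot u$'' does not give that.
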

\begin{proof}
From its definition, it is easy to see that $\rho[u;0] = 0$. When the
Einstein equations are satisfied, a straightforward calculation shows
shows that what we have been calling $\cdots$ terms simplify and factor
through $\bar{K}_{ab}[u]$ and $B_{ab}[u]$, which can then be moved to
the left-hand side in~\eqref{eq:spK-propeq} and~\eqref{eq:B-propeq},
thus determining the $\mathcal{P}$ and $\sigma$ operators. From these
expressions it is obvious that also $\sigma[u;0] = 0$ and $P[u;0,0] =
0$.

When $(M,g_{ab})$ is globally hyperbolic, for any smooth Cauchy surface
$\Sigma$, there exists a smooth Cauchy temporal function $t$ such that
$\Sigma = \{ x \in M \mid t(x) = 0 \}$ (cf.~\cite[Thm.1]{bernard-suhr}
and references therein). It remains to show that $\mathcal{Q}[u]$ and
$\mathcal{P}[u;\bar{K},B]$ have well-posed initial value problems in the
sense of Section~\ref{sec:propeq}, and to identify their admissible
sets. For the former, $\mathcal{A}_\mathcal{Q}$ simply consists of all
unit timelike vector fields $u_a$; $\mathcal{Q}[u] = 0$ is then a
quasi-linear wave equation, whose well-posedness is covered by the
theory in~\cite[Ch.9]{ringstrom-cauchy}. For the latter, we let
$\mathcal{A}_\mathcal{P}$ consist of all pairs $(u, (\bar{K},B))$ with
$u_a$ unit timelike, no condition being imposed on the second slot. The
compatibility hypothesis of Lemma~\ref{lem:propeq} is then immediate,
and in particular $(u,0) \in \mathcal{A}_\mathcal{P}$. This case is more
complicated and is delegated to Lemma~\ref{lem:spKB-propeq-adn} in
Appendix~\ref{app:adn}.
\end{proof}

As discussed in Section~\eqref{sec:propeq}, since
Lemma~\eqref{lem:propeq} applies, we could use~\eqref{eq:spK-propeq}
and~\eqref{eq:B-propeq} to derive purely spatial equations $(\bar{K},
B)^\Sigma[(u)|_\Sigma] = 0$ whose solutions are in bijection with unit
Killing vectors on a neighborhood of $\Sigma$ (imitating for instance
the calculations in~\cite{khgp-ckid}). We have checked that this
procedure reproduces the uKID equations~\eqref{eq:uKID}, but by more
laborious derivation than the one given in Section~\ref{sec:uKID}.%

\section{Discussion} \label{sec:discussion}

In this work, we have introduced the \emph{unit Killing equations}, a non-linear system
characterizing unit vector fields proportional to Killing vectors, and
derived the corresponding \emph{unit Killing initial data} (uKID)
equations. The latter provide necessary and sufficient conditions for a
$\Lambda$-vacuum Einstein initial data set to admit, in any globally
hyperbolic development, a unit vector field locally proportional to a
time-like or space-like Killing vector. In contrast to the classical KID
equations, the uKID system is intrinsically non-linear because of the
unit normalization constraint. Nevertheless, we have shown that it
remains a finite-type system by constructing an explicit prolongation to
canonical connection form, which immediately bounds the dimension of its
solution space by $n(n+1)/2-1$.

Besides the direct derivation of the uKID equations, we have also
revisited the propagation identity method used previously for Killing
and conformal Killing initial data. The unit Killing system requires a
substantial extension of that framework. In particular, the natural
propagation operator couples the spatial symmetric tensor
$\bar{K}_{ab}$ and the $2$-form $B_{ab}$ through second-order terms,
preventing the resulting system from fitting the standard theory of
normally hyperbolic equations. To overcome this difficulty, we developed
a more general formulation of the propagation argument and showed that
the relevant evolution system is well posed by appealing to
Agmon-Douglis-Nirenberg hyperbolicity. Besides establishing an independent derivation of the
uKID equations, this broader framework should also be applicable to
other geometric PDEs whose natural propagation systems fail to be
normally hyperbolic.

The principal motivation for introducing the unit Killing equations is
that, in many physically relevant situations, the geometrically natural
object is not the Killing vector itself but only its normalized
direction. This occurs, for example, for static spherically symmetric perfect-fluid solutions,
where the unit direction of the time-like Killing vector field can be reconstructed intrinsically from curvature invariants, while the norm of the Killing vector is not. Consequently, the uKID equations
provide the missing ingredient needed to formulate intrinsic initial
data characterizations for such spacetimes, in the same spirit that the
classical KID equations enter the geometric characterization of the Kerr
solution.

Several natural extensions remain open. The most immediate one is to
replace the vacuum Einstein equations by Einstein equations coupled to
matter. Since the off-shell structure of the propagation identities was
kept throughout the derivation, adapting the argument to specific matter
models, such as perfect fluids, should be feasible. This would be a
necessary step toward intrinsic initial data characterizations of
physically important fluid solutions, including static stellar
models.

More broadly, the generalized propagation framework developed here may
prove useful for other overdetermined geometric systems whose
propagation equations possess a more complicated principal structure.
Whether similar techniques can be applied to other non-linear variants
of the Killing equations, or to higher-rank geometric PDEs admitting
finite-type prolongations, remains an interesting direction for future
research.

\section*{Acknowledgements}
IK's was partially supported by GA\v{C}R project GA22-00091S and the \emph{Czech Academy of Sciences} Research Plan RVO: 67985840. SM was partially supported by the Generalitat Valenciana grant CIACIF/2021/028, the Generalitat Valenciana Project AICO/2020/125 and the Universitat Polit\`ecnica de Val\`encia Project PAID-11-25. SM would also like to thank the Institute of Mathematics of the Czech Academy of Sciences for its hospitality during the two research stays in which this work was partially carried out.

\appendix

\section{ADN hyperbolic PDE systems} \label{app:adn}

The evolution equations that appear in the propagation identity for the unit
Killing equation in Section~\ref{sec:ukilling-propeq} are of \emph{mixed order}
and not of an obvious classical hyperbolic type. However, they do fall into the
class of \emph{Agmon-Douglis-Nirenberg (ADN) strictly hyperbolic} systems. Our
main reference about mixed-order hyperbolic systems is the
monograph~\cite{gv-mixed}, where ADN hyperbolicity is discussed. The main
relevant definitions and properties are summarized below.

Recall that the \emph{principal symbol} of a linear differential
operator $L[w]$ on $\mathbb{R}^n$ of order $k$ consists of all the terms
containing derivatives of order $k$ and, replacing each occurrence of a
partial derivative $\del_i$ by the indeterminate $p_i$, is denoted by
$L^0(p)$.

Note that we have tied the notion of the principal symbol to
the order $k$, so that when an operator $L[w]$ of order $k$ is
considered as a degenerate case of an operator of a larger order $k'>k$, its
principal symbol as an order $k'$ operator vanishes. When $L[w]$ is of
order $k$, it cannot be considered a differential operator of lower order $k'$
for any $k'<k$, and hence its principal symbol as an order $k'$ operator
is undefined.

\begin{dfn}[{\cite[\textsection\textsection I.2.2, II.4.1, II.7.2.1]{gv-mixed}}]
\label{dfn:strictly-hyper}
Let $M = \mathbb{R}^n$ and fix one of the standard coordinates to be the
time variable $t$. Denote the corresponding derivative by $\del_t$,
and the derivative in the complementary standard coordinates by $\del_\perp$.

(a) A linear differential operator $y = L[w]$ is called
\emph{(uniformly) strictly hyperbolic} if the polynomial
$\det L^0(p)$ is (i) monic (has unit coefficient) with respect to the
highest power of $p_t$,%
	\footnote{The original definition~\cite[\textsection I.2.2]{gv-mixed} stipulates that the coefficient of the highest power of $p_t$ is exactly $1$, but from context it is clear that we can relax this condition for the coefficient to be an invertible element (a \emph{unit}) in the ring $C^\oo(\mathbb{R}^n)[p_t,p_\perp]$, that is, any smooth position dependent scalar that does not change sign.} %
(ii) all of its roots with respect to $p_t$
(all other variables are fixed) are real and simple, and (iii) these
roots are separated from each other uniformly by at least $\delta
\|p_\perp\|$, for some $\delta > 0$, over all spatial momenta
$p_\perp$ and spacetime points.

(b) Consider a linear differential operator $y = L[w]$, with the following
block structure under the decomposition $y = (y_1,\dots,y_m)$, $w =
(w_1,\dots,w_m)$:
\begin{equation}
	L[w] = \begin{bmatrix}
			L_{11} & \cdots & L_{1m} \\
			\vdots & & \vdots \\
			L_{m1} & \cdots & L_{mm}
		\end{bmatrix}
		\begin{bmatrix}
			w_1 \\ \vdots \\ w_m
		\end{bmatrix}
	.
\end{equation}
We say that $L$ is \emph{Agmon-Douglis-Nirenberg (ADN) strictly
hyperbolic} when there exist integer \emph{weights} $t_j, s_j \ge 0$, $j=1,\ldots,m$,
such that $L_{ij}$ is a differential operator of order $(t_j-s_i)$ and
the \emph{weighted principal symbol} has the block diagonal structure
\begin{equation}
	L^0(p) := \begin{bmatrix}
			L_{11}^0(p) & \cdots & L_{1m}^0(p) \\
			\vdots & & \vdots \\
			L_{m1}^0(p) & \cdots & L_{mm}^0(p)
		\end{bmatrix}
		= \begin{bmatrix}
			L_{11}^0(p) & & \\
			& \ddots & \\
			& & L_{mm}^0(p)
		\end{bmatrix}
	,
\end{equation}
with $L_{ij}^0(p)$ being the principal symbol of $L_{ij}$ as an operator
of order $(t_j-s_i)$, where each $L_{ii}^0(p)$ is strictly hyperbolic.
\end{dfn}
Note that according to the above definition any scalar operator that has the
wave operator $\square = g^{ab} \nabla_a \nabla_b$ as its principal term is
strictly hyperbolic with respect to a Cauchy time slicing on a globally
hyperbolic background. The chiral Dirac equation $\not{\nabla}_\pm =
\gamma_\pm^a \nabla_a$ and some first order symmetric hyperbolic
systems~\cite[Def.10]{kh-chargeom} (those with simple characteristics) are also
strictly hyperbolic, justifying that not only scalar operators are examples.
By extension any system whose principal term is, for instance, $\square$
multiplied by the identity matrix is then ADN strictly hyperbolic.

The following well-posedness result can be stated for the initial value
problem of ADN strictly hyperbolic systems.
\begin{dfn}[{\cite[\textsection II.7.2.4]{gv-mixed}}] \label{dfn:mixed-ivp}
Assuming the notation and hypotheses of
Definition~\ref{dfn:strictly-hyper}, for the inhomogeneous PDE system
$L[w] = y$, the \emph{mixed order initial data} consists of smooth
functions $(w_j^{(l)})_{j=1,\ldots,m}^{l=0,\ldots,t_j-1}$ on the
surface $t=0$. The corresponding \emph{mixed-order initial value
problem} consists of finding a smooth $w$ that satisfies $L[w]=y$ for
all time and the conditions
\begin{equation} \label{eq:mixed-ivp}
\begin{aligned}
	\del_t^l w_j|_{t=0} &= w_j^{(l)}
		~ (l=0,\ldots,t_j-1)
	\\
	\text{and} \quad
	\del_t^l \sum_{j=1}^m L_{ij}[w_j]|_{t=0} &= \del_t^l y_i|_{t=0}
		~ (l=0,\ldots,s_i-1) ,
\end{aligned}
\end{equation}
where the second equality condition is empty whenever $s_i=0$, and
otherwise its left-hand side should be thought of as an expression
purely in terms of the $(w_j^{(l)})$ data obtained by formal time
differentiation. The above initial value problem is \emph{well-posed}
when there exists for all time a unique smooth $w$ satisfying $L[w]=y$
and the conditions~\eqref{eq:mixed-ivp}.
\end{dfn}

Note that, when $y=0$, due to the linearity of the equation $L[w] = 0$, if the
initial data $(w_j^{(l)})_{j=1,\ldots,m}^{l=0,\ldots,t_j-1} = 0$, the
unique solution is $w=0$.

\begin{prp}[{\cite[\textsection II.7.2.4]{gv-mixed}}] \label{prp:adn-well-posed}
Let $M = \mathbb{R}^n$ and fix one of the standard coordinates as the
time variable $t$, such that $t=0$ is a Cauchy surface. Consider a
uniformly ADN strictly hyperbolic system $L[w] = y$ with smooth
coefficients (Definition~\ref{dfn:strictly-hyper}). Then its initial
value problem (as in Definition~\ref{dfn:mixed-ivp}) is well-posed.
\end{prp}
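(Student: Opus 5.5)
Since the statement is quoted from \cite[\textsection II.7.2.4]{gv-mixed}, I would only reconstruct the Leray--G\aa rding energy-method argument, adapted to the ADN weights. The idea is to treat the system as a block-diagonal family of strictly hyperbolic operators $L_{ii}$, of orders $k_i := t_i - s_i$, coupled through off-diagonal blocks $L_{ij}$. By the block-diagonal form of the weighted principal symbol in Definition~\ref{dfn:strictly-hyper}, each coupling block is of weighted order at most $t_j - s_i - 1$, so it is lower order in the appropriate weighted sense. This reduction has three consequences:
\begin{itemize}
\item $w_i$ should be measured in a norm one order below its natural energy level, $r + t_i - 1$;
\item the equation $L_{ii} w_i = y_i - \sum_{j\ne i} L_{ij} w_j$ should be estimated at level $r + s_i$;
\item the coupling terms then cost only a norm of $w_j$ at level $r + t_j - 1$ without loss, that is, at lower order.
\end{itemize}

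\emph{Step 1 (scalar energy estimate).} For each strictly hyperbolic $L_{ii}$, I would construct the Leray--G\aa rding energy. The separated real roots in $p_t$ give a symmetrizer, either through the hyperbolic polynomial $\del_{p_t} \det L_{ii}^0$ or through a factorization into first-order pseudodifferential factors. This yields
\[
\sum_{l<k_i} \|\del_t^l w_i(t)\|_{H^{\rho-l}} \le C e^{Ct}\Bigl(\text{data at } t=0 + \int_0^t \|L_{ii} w_i\|_{H^{\rho-k_i+1}}\Bigr),
\]
with constants uniform, thanks to the uniform root separation in Definition~\ref{dfn:strictly-hyper}. I would then apply this with $\rho = r + t_i - 1$.

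\emph{Step 2 (weighted a priori estimate).} Next I would sum over $i$ and put the coupling terms on the right. This is the step I expect to be the main obstacle. The blocks $L_{ij}$ may contain more time derivatives than $L_{jj}$, since $t_j - s_i$ can exceed $k_j$. I would therefore first use the equations themselves, via formal differentiation in $t$, to trade excess time derivatives of $w_j$ for spatial derivatives and lower terms. This is exactly why the data include $t_j - 1$ rather than $k_j - 1$ time derivatives, and why the compatibility conditions in the second line of~\eqref{eq:mixed-ivp} are imposed for $l < s_i$. Once this bookkeeping is done, Gronwall closes the weighted estimate for all $r$. The estimate gives uniqueness immediately: for $y = 0$ and zero data, the solution is $w = 0$.

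\emph{Step 3 (existence and globalization).} Existence would follow by duality. The formal adjoint is again ADN strictly hyperbolic, with weights $(s_i,t_j)$ swapped and time reversed, so the same estimate holds for it. A Hahn--Banach argument then gives weak solutions. Alternatively, one can regularize with mollified coefficients and pass to the limit. Smoothness comes from the estimate holding at every $r$. Finally, I would pass from local to global time on $\mathbb{R}^n$ using uniformity of the constants and finite propagation speed, iterating the local existence interval, whose length is independent of the starting slice.
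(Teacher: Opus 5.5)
The paper gives no proof of this proposition. It imports the result from \cite[\textsection II.7.2.4]{gv-mixed}, and the surrounding text only explains how the mixed-order data and the constraints for $l<s_i$ fit the framework of Section~\ref{sec:propeq}. Your proposal is therefore a genuinely different, self-contained route, and its outline is sound. You use Leray--G\aa rding estimates for each diagonal block at level $r+t_j-1$. You observe that an off-diagonal block of order $\le t_j-s_i-1$, measured in $H^{r+s_i}$, costs exactly the energy level of $w_j$. And you eliminate the excess time derivatives through the $j$-th equation; these are the derivatives of order $\ge k_j$, which appear when $s_j>s_i$. You should state why that elimination terminates: each substitution either lowers the time order at fixed level or, passing through an off-diagonal block, lowers the level by one.

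The citation buys rigor under the monograph's exact hypotheses at no cost. Your route buys transparency. In particular, it shows that only per-block strict hyperbolicity is used, never simplicity of the roots of the full determinant $\prod_i \det L^0_{ii}(p)$. That is exactly the situation in Lemma~\ref{lem:spKB-propeq-adn}, where every refined scalar block is $\square$.

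Several points need tightening:
(i) Uniqueness among arbitrary smooth solutions, which Definition~\ref{dfn:mixed-ivp} demands, does not follow from global $H^{\rho}(\mathbb{R}^n)$ estimates, since those norms may be infinite. You need the local energy inequality on domains of dependence, so finite propagation speed is required already at this step, not only for globalization. Global Sobolev estimates also need coefficients bounded with all their derivatives, which ``uniformly'' provides only implicitly.
(ii) The adjoint's weights are $t'_i=T-s_i$ and $s'_j=T-t_j$ with $T=\max_j t_j$; a literal swap of $(s_i,t_j)$ does not work.
(iii) Mollifying the coefficients does not by itself produce solutions. Keep the duality argument, or use analytic approximation together with Cauchy--Kovalevskaya.
(iv) Definition~\ref{dfn:strictly-hyper}(b) allows matrix-valued diagonal blocks, and these need a reduction before Step~1, e.g.\ composition with the cofactor operator.
(v) For a linear problem the $e^{Ct}$ estimate already gives existence on every slab $[0,T]$, so no iteration of local intervals is needed.
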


Let us spell out how Definition~\ref{dfn:mixed-ivp} and
Proposition~\ref{prp:adn-well-posed} supply, for the homogeneous system $L[w] =
0$, all the ingredients of a well-posed initial value problem in the
terminology of Section~\ref{sec:propeq}. The initial data operator is $\Psi_0 =
ID[w] = (\del_t^l w_j)_{j=1,\ldots,m}^{l=0,\ldots,t_j-1}$, while the operator
$C[y] = (\del_t^l y_i)_{i=1,\ldots,m}^{l=0,\ldots,s_i-1}$ (which obviously
satisfies $C[0] = 0$) captures the second condition in~\eqref{eq:mixed-ivp}.
The initial constraint operator $IC[\Psi_0]$ is obtained by rewriting the
left-hand side of that condition purely in terms of the initial data $\Psi_0 =
(w_j^{(l)})$ by formal time differentiation, so that the compatibility relation
$IC[ID[(w)|_{t=0}]] = C[L[w]]|_{t=0}$ holds identically. Note that the number
of formal time derivatives of each component $w_j$ included in $ID[w]$ is
precisely the smallest for which the compatibility relation can be expressed
without any further time differentiation. Together with the observation made
above that vanishing initial data give rise to the vanishing solution, this
makes the homogeneous system $L[w] = 0$ a \emph{propagation equation} in the
sense of Section~\ref{sec:propeq}.

We have not found a version of Definitions~\ref{dfn:strictly-hyper},
\ref{dfn:mixed-ivp} and Proposition~\ref{prp:adn-well-posed} stated directly
for a general manifold $M$. It would go beyond the scope of this work to work
them out in detail. We simply formulate in
Proposition~\ref{prp:adn-global-well-posed} a reasonable generalization based
on geometric structures that arise in studies of some other general classes of
hyperbolic systems~\cite{kh-chargeom}. Accordingly, we do not give it a
complete proof, but merely sketch how it should proceed following some standard arguments.

\paragraph{Principal symbol.}
The first thing to note is that the principal symbol $L^0(p)$ of a linear
differential operator $L[w]$ is well-defined on manifolds. When $\mathbb{R}^n$
is replaced by a manifold $M$, a matrix valued differential operator is
replaced by a smooth linear differential operator $L \colon \Secs(W) \to
\Secs(V)$ that maps between sections of vector bundles $V\to M$ and $W\to M$.
The indeterminates $p_i$ are interpreted as coordinates on the fibers of the
cotangent bundle $T^*M$, and the replacement of the partial derivatives
$\del_i$ by $p_i$ can be done in any coordinate chart. The transformation rules
for differential operators then guarantee that the principal symbol becomes a
well-defined fiber-wise homogeneous polynomial function on $T^*M$ valued in the
linear vector bundle morphisms $L^0 \colon T^*M \to \operatorname{Hom}(W,V)$.

The principal symbol of a scalar wave operator $L[w] = \square w$ is of
course $L^0(p) = p_a p_b g^{ab}$.

\paragraph{Characteristic covector cone, time orientation.}
Next, a core observation is that the locus of solutions $p = (p_t, p_\perp) \in
T^*_xM$ of the polynomial equation $\det L^0(p) = 0$ is well-defined also on
manifolds and for a strictly hyperbolic system $L[w]=0$ on $\mathbb{R}^n$ it is
a possibly multi-sheeted (\emph{characteristic}) surface
$\mathcal{C}_x^\oast(L^0) \subset T^*_xM$ that consists of properly nested
cones, say $\mathcal{C}_x^\oast(L^0) = \bigcup_{j=0}^r
\mathcal{C}_x^{j,\oast}(L^0)$. The root separation condition implies that each
cone sheet is a simple root of $\det L^0(p) = 0$ and is smooth away from the
tip $p=0$. Because $\det L^0(p)$ is a homogeneous polynomial,
$\mathcal{C}_x^\oast(L^0)$ is symmetric under $p\mapsto -p$ reflection, and
$\mathcal{C}_x^{j,\oast}(L^0) \setminus \{0\} = \mathcal{C}_x^{+j\oast}(L^0)
\sqcup \mathcal{C}_x^{-j,\oast}(L^0)$ splits into two halves interchanged by
the reflection. The \emph{future oriented} $(+)$ and \emph{past oriented} $(-)$
labels are a priori arbitrarily assigned. We let $j=0$ label the inner-most
cone sheet (the one with the largest value of $|p_t|$ for given $p_\perp$).
This structure of $\mathcal{C}_x^\oast(L^0)$ is almost enough to reasonably
define strictly hyperbolic operators on manifolds, with the caveat that it does
not cover the monic structure of $\det L^0(p)$ and the $x$-uniform separation
between the cone sheets $\mathcal{C}_x^{j,\oast}(L^0)$, which are not easy to
define invariantly on a general manifold. The union $\mathcal{C}^{\oast}(L^0) =
\bigsqcup_{x\in M} \mathcal{C}_x^{\oast}(L^0) \subset T^*M$ is the globally
defined \emph{characteristic covector} variety. The unions of the individual
sheets are defined similarly and give the nested cone varieties
$\mathcal{C}^{j,\oast}(L^0) \subseteq \mathcal{C}^\oast(L^0)$. In order to
avoid an overly complicated geometric formalism, we simply require our notion
of a strictly hyperbolic system $L[w] = y$ on a manifold $M$ to have a variety
$\mathcal{C}^\oast(L^0) \subset T^*M$ that is a smooth submanifold (excluding the
$p=0$ section) that consist of pointwise properly nested cones, such that there
also exists a locally finite cover of $M$ by charts, with trivializations of
$V$ and $W$, such that in each chart the monic and root separation conditions
on $\det L^0(p)$ are verified uniformly. We will say that two systems $L_1[w_1]
= y_1$ and $L_2[w_2] = y_2$ are \emph{simultaneously} strictly hyperbolic when
they verify the above conditions with respect to each chart on a common locally
finite cover, which easily generalizes to any finite number of systems. The
characteristic covector variety is \emph{time orientable} when a coherent
choice of $(+)$ and $(-)$ halves of the inner characteristic covector cones can
be made globally on $M$, that is more precisely when
$\mathcal{C}^{0,\oast}(L^0) \setminus \{0\} = \mathcal{C}^{+0,\oast}(L^0)
\sqcup \mathcal{C}^{-0,\oast}(L^0)$ decomposes into a disjoint union of
connected sub-varieties (assuming $M$ is itself connected, and working on each
connected component otherwise). The choice of such decomposition constitutes a
\emph{time orientation}.

Again, for the scalar wave operator $\det L^0(p) = p_a p_b g^{ab}$ and
$\mathcal{C}^\oast(L^0) \subset T^*M$ is the standard covector null-cone.

\paragraph{ADN weights.}
The block decompositions of $L$ and $L^0(p)$ on a manifold must be interpreted
with respect to some direct decompositions $V = \bigoplus_i V_i$, $W =
\bigoplus_j W_j$ of the vector bundles. The ADN weights $s_j$ and $t_j$ are
assigned respectively to the sub-bundles $V_j$ and $W_j$. On a general manifold
$M$, we can take the definition of an ADN strictly hyperbolic system $L[w] = 0$
to consist of (a) a block decomposition, (b) the weights assigned to the
blocks, (c) the bound on the differential order of each $L_{ij}$ block, and (d)
the condition that the weighted principal symbol $L^0_{ij}(p)$ is block
diagonal with the diagonal symbols $L^0_{jj}(p)$ being simultaneously strictly
hyperbolic on $M$. An ADN strictly hyperbolic system is time orientable when
each of the strictly hyperbolic diagonal principal symbols is time orientable
and a time orientation consists of a simultaneous choice of time orientation
for each diagonal block.

\paragraph{Ray vector cone, global hyperbolicity.}
Each $\mathcal{C}^{j,\oast}(L^0)$ sheet has a \emph{polar} (or
\emph{projective}) dual (\emph{ray}) cone $\mathcal{C}_x^j(L^0) \subset
T_xM$~\cite{wiki-projective-dual}, consisting of those vectors $\xi \in T_xM$
that represent tangent hyperplanes to $\mathcal{C}_x^{j,\oast}(L^0)$. The unions
$\mathcal{C}^{j}(L^0) = \bigsqcup_{x\in M} \mathcal{C}_x^{j}(L^0)$ are defined
similarly and $\mathcal{C}^{0}(L^0) \subset TM$, where $j=0$ now labels the
outer-most cone sheet, is the \emph{outer ray vector}
variety~\cite[Def.16]{kh-chargeom}. A time orientation also implies a
corresponding decomposition $\mathcal{C}^0(L^0) = \mathcal{C}^{+0}(L^0) \sqcup
\mathcal{C}^{-0}(L^0)$. For an ADN strictly hyperbolic system, we take
$\mathcal{C}^{\pm0}(L^0) = \bigcup_j \mathcal{C}^{\pm0}(L^0_{jj})$. The
fiberwise closed convex hull $\bar{\mathcal{C}}^{\pm0}(L^0)$ of
$\mathcal{C}^{\pm0}(L^0)$ defines
for instance the allowed
tangents of a future ($+$) or past ($-$) directed causal curves. The same cones
also set the maximum speed of propagation for solutions of $L[w]= y$. The
\emph{cone structure} $\bar{\mathcal{C}}^{+0}(L^0) \subset TM$ defines a
\emph{causal structure} on $M$~\cite{minguzzi}. We will not go into any details
of that formalism here, we shall only borrow from it the notion of \emph{global
hyperbolicity}~\cite[Def.2.20]{minguzzi} in the guise of the existence of a
\emph{Cauchy temporal function} $t$~\cite[Thm.2.45]{minguzzi}, which is a
smooth function on $M$ such that $dt(\xi) > 0$ for any non-zero $\xi \in
\bar{\mathcal{C}}_x^{+0}(L^0)$ and each inextensible causal curve of
appropriate regularity passes through each level set of $t$ exactly once.

At this level of generality, following~\cite{gv-mixed}, we have not asked for
any convexity of the $\mathcal{C}_x^{\pm j,\oast}$ cones. However, when they
bound convex cones, the duality relation to the corresponding ray cone is the
perhaps more familiar \emph{polar} (or \emph{convex})
duality~\cite[Def.15]{kh-chargeom}, namely $\mathcal{C}_x^{\pm j}(L^0)$ is the
boundary of the set of vectors $\xi \in T_xM$ whose corresponding half-spaces
in $T_x^*M$ contain $\mathcal{C}_x^{j,\oast}(L^0)$. For the usual covector
null-cone $p_a p_b g^{ab} = 0$ its dual is just the usual vector null-cone
$p^a p^b g_{ab} = 0$.

\paragraph{Quasi-linear systems.}
Next, we need to extend the above notions to some reasonable non-linear PDE
systems. Namely, we call a PDE system $L[w] = 0$ \emph{quasi-linear ADN
strictly hyperbolic} when (a) there exists a differential operator $N[w]$ and a
parametrized differential operator $\ell[w;v]$, linear in $v$, such that $L[w]
= \ell[w;w] + N[w]$, (b) there exists a block decomposition $L_i[w] = \sum_j
\ell_{ij}[w;w_j] + N_i[w]$ with corresponding ADN weights $s_j$, $t_j$ and an
\emph{$L$-admissible} subset $\mathcal{A} \subset \Secs(W)$ such that
$\ell[w;v] = \bigoplus_i \sum_j \ell_{ij}[w;v_j]$ is ADN strictly hyperbolic
when $w\in \mathcal{A}$, and (c) the differential order of the dependence on
the component $w_j$ of the coefficients of the $i$-th block row $\sum_k
\ell_{ik}[w;v_k] + N_i[w]$ is strictly $< (t_j-s_i)$.

The role of the $L$-admissible subset $\mathcal{A}$ is analogous to that of the
$P$-admissible parameter fields from Section~\ref{sec:propeq}: it delimits the
configurations $w$ about which the quasi-linear structure of $L[w]$ remains ADN
strictly hyperbolic. In the classical quasi-linear wave equation $g^{ab}
\partial_a \partial_b g_{cd} + N(g,\partial g) = 0$ obtained by a harmonic
coordinate gauge fixing of the Einstein
equations~\cite[Sec.14.1]{ringstrom-cauchy}, the admissible metric
configurations $g_{ab}$ are precisely those that have Lorentzian signature.

\begin{prp} \label{prp:adn-global-well-posed}
Let $L\colon \Secs(W) \to \Secs(V)$ be a quasi-linear ADN strictly hyperbolic PDE
system on $M$, for which there exists a Cauchy temporal function $t$. Then,
with the initial data restricted to those induced by $L$-admissible
configurations $w \in \mathcal{A}$, the system $L[w] = 0$ has a well-posed
initial value problem in the sense of Section~\ref{sec:propeq}, with the
solution in general guaranteed to exist only on some neighborhood $U \supset
\Sigma$ of the initial Cauchy surface $\Sigma = t^{-1}(0)$.
\end{prp}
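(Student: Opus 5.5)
The proof proceeds in three steps: localize to coordinate charts where Proposition~\ref{prp:adn-well-posed} applies, patch the local solutions together using finite speed of propagation, and then extend from linear to quasi-linear systems by iteration. Throughout I rely on the existence of the Cauchy temporal function $t$, whose level sets are spacelike for the ray cone $\bar{\mathcal{C}}^{+0}(L^0)$.

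\textbf{Step 1 (linear case, local).} Fix the locally finite chart cover from the definition of simultaneous strict hyperbolicity. In each chart, adapt coordinates so that $t$ is the time coordinate; this is possible because $\d t$ lies inside the innermost characteristic covector cone. Trivialize $V_i$ and $W_j$ there. The monic and uniform root-separation conditions then hold in the chart. Extend the coefficients outside a compact subset of the chart to a uniformly ADN strictly hyperbolic operator on all of $\mathbb{R}^n$, for instance by interpolating the coefficients to constant ones that are strictly hyperbolic. Proposition~\ref{prp:adn-well-posed} then gives a unique smooth solution of the mixed-order initial value problem~\eqref{eq:mixed-ivp}.

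\textbf{Step 2 (domain of dependence and patching).} I would establish finite propagation speed: a solution of $L[w]=y$ with vanishing data and source on a compact region $K\subset\Sigma$ vanishes on the future development $D^+(K)$ defined by $\bar{\mathcal{C}}^{+0}(L^0)$. In $\mathbb{R}^n$ this is the standard Holmgren/energy-type consequence of hyperbolicity; I would take it from~\cite{gv-mixed} or prove it via energy estimates on lens-shaped domains bounded by level sets of $t$. Uniqueness on overlaps then lets the local solutions glue on the union of small domains of dependence, which is a neighborhood $U\supset\Sigma$. For linear $L$, global hyperbolicity together with the usual exhaustion of $M$ by $t$-slabs actually gives a global solution.

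\textbf{Step 3 (quasi-linear case).} For $w\in\mathcal{A}$, consider the iteration $\ell[w^{(k)};w^{(k+1)}] = -N[w^{(k)}]$ with fixed mixed-order initial data. Condition~(c), which says the coefficients of the $i$-th block row depend on $w_j$ at order strictly less than $t_j-s_i$, guarantees that the iteration is closed in weighted Sobolev scales $\prod_j H^{\sigma+t_j}$. Each iterate is ADN strictly hyperbolic because $\mathcal{A}$ is open in the relevant $C^k$ topology and the data come from an admissible configuration. Uniform energy estimates for the linearized system on a short time slab, followed by a contraction in a lower norm, give local existence and uniqueness. Smoothness then follows from persistence of regularity. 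Patching as in Step 2 produces $U$. Finally, vanishing data yield $w=0$ whenever $0\in\mathcal{A}$ and $N[0]=0$, and the operators $ID$, $C$, $IC$ are exactly the ones described after Proposition~\ref{prp:adn-well-posed}. This verifies every requirement of a well-posed initial value problem in the sense of Section~\ref{sec:propeq}.

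\textbf{Main obstacle.} The hardest part is the energy estimate for mixed-order systems with variable (quasi-linear) coefficients. It requires symmetrizers adapted to the weights, and the off-diagonal lower-weight blocks $\ell_{ij}$ must be controlled as perturbations. I would first try to reduce the system to a first-order strictly hyperbolic system by applying $\Lambda^{t_j}$-type order-reduction pseudodifferential operators to each block, diagonalizing the principal part, and then invoking standard strictly hyperbolic pseudodifferential energy estimates. Given the paper's stated scope, this step would likely be only sketched.
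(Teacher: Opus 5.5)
Your proposal is correct at the paper's level of rigor and follows essentially the same route as its sketch. Both localize to the charts of the simultaneous-hyperbolicity cover, apply Proposition~\ref{prp:adn-well-posed} there, patch using finite propagation speed, and handle the quasi-linear case by a frozen-coefficient iteration on a short time slab. The one difference is the patching: you glue by uniqueness on overlaps, whereas the paper splits the data and sources with a partition of unity and sums by linearity.

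If anything you are more careful than the paper. You keep the source $-N[w^{(k)}]$ in the iteration, and you make explicit the openness of $\mathcal{A}$ that the paper's ``small data or short time'' remark tacitly uses. One detail needs fixing: linearly interpolating coefficients to constants can destroy strict hyperbolicity unless the chart is small. Extend instead by composing the coefficients with a smooth map of $\mathbb{R}^n$ into a coordinate ball that is the identity on a smaller ball.
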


\begin{proof}[Sketch of proof]
For each chart of the locally finite cover (appearing in the definition of
simultaneous strict hyperbolicity), the linear system $\ell[w_{(0)};w_{(1)}]
= y$ with a frozen admissible configuration $w_{(0)} \in \mathcal{A}$ is an
ADN strictly hyperbolic system. The initial data and the inhomogeneity can be
broken up by a partition of unity to have compact support in each chart. The
PDE system in each chart admits a unique solution for any such valid initial
data on the Cauchy surface $\Sigma = t^{-1}(0)$, by
Proposition~\ref{prp:adn-well-posed}. Since hyperbolic systems have finite
speed of propagation, bounded by the cone structures
$\bar{\mathcal{C}}^\pm(L^0)$, reconstructing the solution by summing over the
charts, at any point close to $\Sigma$ only finitely many charts will
contribute, giving a well-defined result. Shifting the initial data surface
$\Sigma$ to the future in this domain allows the argument to be repeated,
expanding the domain of the solution as much as needed.

A local-in-time solution of the quasi-linear problem $L[w] = 0$ is constructed
by standard contraction mapping or iterative arguments (see for
instance~\cite[Ch.9]{ringstrom-cauchy}, where second order quasi-linear wave
equations are treated). The iteration proceeds by solving the linearized
system $\ell[w_{(k)};w_{(k+1)}] = 0$ with coefficients frozen at the $k$-th
iterate $w_{(k)}$, and the monotonicity and uniform separation of the cone
sheets guarantee convergence. Each iterate remains in $\mathcal{A}$ provided
the initial data are sufficiently small or the time interval is sufficiently
short.
\end{proof}

Finally, let us sketch how to check that the evolution equations
obtained at the end of Section~\ref{sec:ukilling-propeq} are indeed of
the quasi-linear ADN strictly hyperbolic type discussed above.
\begin{lem} \label{lem:spKB-propeq-adn}
Let $(M,g_{ab})$ be globally hyperbolic and let $u_a$ be a unit timelike
vector field on $M$. Then the propagation operator
$\mathcal{P}[u;\bar{K},B]$, whose two block components $\mathcal{P}_1$
and $\mathcal{P}_2$ are the left-hand sides of
respectively~\eqref{eq:spK-propeq} and~\eqref{eq:B-propeq}, is
quasi-linear ADN strictly hyperbolic in the unknowns $\psi =
(\bar{K}_{ab}, B_{ab})$ with respect to the weights
\begin{equation} \label{eq:spKB-weights}
	(t_1,t_2) = (3,2) \quad \text{for } (\bar{K}_{ab}, B_{ab}) ,
	\qquad
	(s_1,s_2) = (1,0) \quad \text{for } (\mathcal{P}_1, \mathcal{P}_2) ,
\end{equation}
and with admissible set $\mathcal{A}_\mathcal{P}$ consisting of all
pairs $(u,\psi)$ with $u_a$ unit timelike and $\psi = (\bar{K}_{ab},
B_{ab})$ arbitrary. Moreover, $\mathcal{P}[u;\bar{K},B] = 0$ is a
propagation equation in the sense of Section~\ref{sec:propeq}; its
initial data operator is
\begin{equation} \label{eq:spKB-ID}
	ID_{\mathcal{P}}[\bar{K},B]
	= (\bar{K}, \del_t \bar{K}, \del_t^2 \bar{K}; B, \del_t B)|_\Sigma ,
\end{equation}
subject to the single initial constraint is $\mathcal{P}_1[u;\bar{K},
B]|_{\Sigma} = 0$.
\end{lem}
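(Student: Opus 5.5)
The plan is to verify the conditions of the quasi-linear ADN strictly hyperbolic definition directly from the left-hand sides of~\eqref{eq:spK-propeq} and~\eqref{eq:B-propeq}. Once these conditions hold, Proposition~\ref{prp:adn-global-well-posed} gives well-posedness.

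First, I will choose the block decomposition. I treat $\bar{K}_{ab}$ (symmetric, $u$-orthogonal) and $B_{ab}$ (2-form) as the unknowns $w_1,w_2$. I treat the $u$-orthogonal symmetric part of $\mathcal{P}_1$ and the 2-form $\mathcal{P}_2$ as the two equation blocks. The coefficients depend only on $u$ and on the background geometry. With the weights~\eqref{eq:spKB-weights}, the allowed orders are:
\begin{itemize}
\item $\mathcal{P}_1$ on $\bar{K}$: order $t_1-s_1=2$;
\item $\mathcal{P}_1$ on $B$: order $t_2-s_1=1$;
\item $\mathcal{P}_2$ on $\bar{K}$: order $t_1-s_2=3$;
\item $\mathcal{P}_2$ on $B$: order $t_2-s_2=2$.
\end{itemize}
I will check these bounds by inspection. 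In~\eqref{eq:spK-propeq}, $B$ appears only undifferentiated, through $\bar{B}$ and $C$, which is order $0\le1$. In~\eqref{eq:B-propeq}, $\bar{K}$ appears at most twice differentiated, through $\nabla_{[a}\D^c\bar{K}_{b]d}$, which is order $2\le3$. Note that the dangerous $\nabla_b\dot{\bar{K}}_c{}^c$ term was removed by taking the exterior derivative.

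Next, I read off the weighted principal symbol. The $(1,2)$ block contains only terms of order $0$ in $B$, below the weighted order $1$, so its weighted symbol vanishes. The $(2,1)$ block has order $\le2<3$, so its weighted symbol also vanishes. Hence $L^0(p)$ is block diagonal. The diagonal blocks are as follows:
\begin{itemize}
\item On $\bar{K}$, expanding $\D$ shows that the principal part is $\gamma\gamma\square\bar{K}$, with symbol $g^{ab}p_ap_b\,\mathrm{id}$ on the $u$-orthogonal symmetric tensors.
\item On $B$, the principal part is $-\square_{dR}B$. Using $\nabla_{[c}B_{ab]}=0$ is not legitimate for an arbitrary unknown $B$. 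So I will take $\mathcal{P}_2$ with $\square_{dR}$ written as $-\nabla^c\nabla_cB_{ab}$ plus curvature terms, i.e.\ via the Weitzenb\"ock formula. This gives symbol $g^{ab}p_ap_b\,\mathrm{id}$.
\end{itemize}
Each diagonal block is then strictly hyperbolic on a globally hyperbolic background, with characteristic cone the metric null cone. The monic and uniform separation conditions are verified chartwise as for the scalar wave operator, and the two blocks are simultaneously strictly hyperbolic. Time orientability is inherited from $g$, and a Cauchy time function for $g$ is a Cauchy temporal function for the common cone structure.

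Then I check quasi-linearity, i.e.\ condition (c). Since $u$ is a parameter field, not an unknown of $\mathcal{P}$, the system is in fact linear in $\psi$. The admissible set is therefore as stated: $u$ unit timelike, which makes $\D$ well defined, and $\psi$ arbitrary.

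Finally, I will identify the initial data and constraints from Definition~\ref{dfn:mixed-ivp}. The data are $\del_t^lw_j$ for $l<t_j$, giving~\eqref{eq:spKB-ID}. The constraints are $\del_t^l\mathcal{P}_i|_\Sigma=0$ for $l<s_i$, which is only $\mathcal{P}_1|_\Sigma=0$ since $s_2=0$. Homogeneity and linearity give that vanishing data yield $\psi=0$ and satisfy the constraint. Together with Proposition~\ref{prp:adn-global-well-posed}, this makes $\mathcal{P}=0$ a propagation equation.

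I expect the main obstacle to be the careful bookkeeping in~\eqref{eq:B-propeq}. I must confirm that no third-order $\bar{K}$ derivatives survive in any contraction, e.g.\ via $u^c$ commuted through $\nabla_{[a}$. I must also handle correctly the use of the Bianchi-type identity $\d B=0$, so that the $B$-block symbol is genuinely $\square$ rather than degenerate.
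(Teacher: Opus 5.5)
Your outline follows the paper's proof step by step: the same unknowns and equation bundles, the same block-order count, vanishing off-diagonal weighted symbols, wave-type diagonal blocks, and data and constraint read off from Definition~\ref{dfn:mixed-ivp}. Your handling of $\square_{dR}$ also matches the paper. $\mathcal{P}_2$ is by definition the left-hand side of~\eqref{eq:B-propeq}, which already contains the full Hodge--de~Rham operator, and closedness of $B[u]$ was used only to derive the identity.

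\textbf{Gap: the diagonal blocks are not strictly hyperbolic as set up.} With blocks $W_1 = S^2T^*_\perp M$ and $W_2 = \wedge^2T^*M$, the weighted symbols are $\pm(g^{ab}p_ap_b)\,\id_{W_i}$. So $\det L^0_{ii}(p)$ is, up to sign, $(g^{ab}p_ap_b)^{\operatorname{rank}W_i}$. Its roots in $p_t$ have multiplicity $\operatorname{rank}W_i$, which exceeds $1$ for $n\ge 3$. This violates condition (ii) of Definition~\ref{dfn:strictly-hyper}(a), which requires simple roots. You only check the monic and separation conditions ``as for the scalar wave operator'', which skips exactly this point. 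The missing step, which the paper supplies, is to refine the block decomposition down to individual scalar components in a local trivialization, each component inheriting the weights of its tensor. The second-order parts act diagonally on components modulo lower-order terms: $\D^c\D_c$ on $\bar K$, and $\pm\square$ on $B$ after your Weitzenb\"ock rewriting. Hence the new off-diagonal entries have vanishing weighted symbol, and each $1\times 1$ diagonal block is a scalar wave operator. Definition~\ref{dfn:strictly-hyper}(b) only asks that the diagonal blocks be strictly hyperbolic, so this refinement suffices.

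\textbf{Error: $\mathcal{P}$ is not linear in $\psi$.} The left-hand side of~\eqref{eq:spK-propeq} contains $-\bar K_{ab}[\tfrac12\dot{\bar K}_c{}^c + (\nabla^c u^d)\bar K_{cd}]$, which is quadratic in $\bar K$. The conclusions you draw from linearity still hold, but they need different justifications:
\begin{itemize}
\item Quasi-linearity condition (c) holds because this term involves $\bar K$ only at order $1 < t_1 - s_1 = 2$, so it can be placed in $N_1$.
\item The admissible set may contain all $\psi$ because the weighted principal symbol is independent of $\psi$ and of $u$, not because of linearity.
\item Vanishing data yield $\psi = 0$ because $\psi = 0$ solves $\mathcal{P}[u;\bar K,B] = 0$, since $\mathcal{P}[u;0,0]=0$ (the paper's compatibility step), and has those data. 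Uniqueness in Proposition~\ref{prp:adn-global-well-posed} then applies. The same fact shows the zero data satisfy the constraint $\mathcal{P}_1|_\Sigma = 0$.
\end{itemize}
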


\begin{proof}
The unknowns are sections of $W_1 = S^2 T_\perp^* M$ (for the
$u$-orthogonal symmetric $\bar{K}_{ab}$) and $W_2 = \wedge^2 T^* M$
(for $B_{ab}$), and the two equations are valued in the same pair of
bundles, $V_i = W_i$. Since $\D_a$ annihilates both $g_{ab}$ and
$u_a$, it preserves the splitting $T^*M \cong T_u^*M \oplus
T_\perp^*M$, so the diagonal blocks below really do act within $W_1$
and $W_2$. In the principal symbol, $\D_a$ can be directly replaced by
$\del_a$. Any terms that could be moved to the left-hand side from the
$\cdots$ terms in~\eqref{eq:spK-propeq} and~\eqref{eq:B-propeq} can be
ignored for our purposes, since they would not affect the principal
symbol.

\emph{Block orders.} The weights~\eqref{eq:spKB-weights} require the
blocks $L_{ij}$ to have orders
\begin{equation}
	\begin{bmatrix} t_1-s_1 & t_2-s_1 \\ t_1-s_2 & t_2-s_2 \end{bmatrix}
	= \begin{bmatrix} 2 & 1 \\ 3 & 2 \end{bmatrix} ,
\end{equation}
and inspection of~\eqref{eq:spK-propeq} and~\eqref{eq:B-propeq} shows
that each block stays within its bound: $\mathcal{P}_1$ involves
$\bar{K}_{ab}$ at order $2$ and $B_{ab}$ only at order $0 < 1$, while
$\mathcal{P}_2$ involves $\bar{K}_{ab}$ at order $2 < 3$ and
$B_{ab}$ at order $2$. Both off-diagonal blocks thus fall strictly below
their bound, their weighted principal symbols vanish, and $L^0(p)$ is
block diagonal.%
	\footnote{This is precisely where taking the exterior derivative
	of~\eqref{eq:B-propeq1} pays off: the term $\nabla_b
	\dot{\bar{K}}_c{}^c$ that it eliminated would have been of order $3 =
	t_1 - s_2$ and would have contributed a non-vanishing off-diagonal
	symbol.}

\emph{Diagonal blocks.} The principal part of the $(1,1)$ block is
$\D^c\D_c$ and that of the $(2,2)$ block is $-\square$. Hence
\begin{equation}
	\mathcal{P}^0(p) = (p_a p_b g^{ab})
		\begin{bmatrix} \id_{W_1} & 0 \\ 0 & -\id_{W_2} \end{bmatrix} ,
\end{equation}
that is, both diagonal symbols are that of the scalar wave operator
$\square$ times identity (up to a harmless sign). We apply
Definition~\ref{dfn:strictly-hyper}(b) with the blocks refined down to
individual scalar components in any local trivialization, each
$1\times 1$ block then being a scalar wave operator $\square$, which
is strictly hyperbolic with respect to a Cauchy time slicing by the
remark following Definition~\ref{dfn:strictly-hyper}. Individual
scalar components of a given tensor field inherit the same weight in
this refinement. Since both diagonal blocks share the single-sheeted
characteristic covector variety $\mathcal{C}^\oast(\mathcal{P}^0) = \{ p_a p_b
g^{ab} = 0 \}$, simultaneous strict hyperbolicity holds with respect
to any locally finite cover by charts adapted to a Cauchy time
slicing; time orientability reduces to that of $(M,g_{ab})$; the cone
structure $\bar{\mathcal{C}}^{+0}(\mathcal{P}^0)$ is the closed solid future
light cone of $g_{ab}$, and a Cauchy temporal function exists by
global hyperbolicity of $(M,g_{ab})$~\cite[Thm.2.45]{minguzzi}.

\emph{Quasi-linearity.} Quasi-linearity holds by direct inspection
of~\eqref{eq:spK-propeq} and~\eqref{eq:B-propeq}. The symbol
computation above is independent of $\psi$ and $u_a$, so $(u, \psi)
\in \mathcal{A}_\mathcal{P}$ for all pairs of field configurations.

\noindent We have now verified all the conditions for $\mathcal{P}[u;
\bar{K}, B]$ to be a quasi-linear ADN strictly globally hyperbolic
equation. Lemma~\ref{prp:adn-global-well-posed} essentially states that
it is a propagation equation in the sense of Section~\ref{sec:propeq},
provided we complete it with the following:

\emph{Initial data.} Definition~\ref{dfn:mixed-ivp} with the
weights~\eqref{eq:spKB-weights} prescribes $t_1 = 3$ time derivatives of
$\bar{K}_{ab}$ and $t_2 = 2$ of $B_{ab}$ on $\Sigma$, which is
$ID_\mathcal{P}$ in~\eqref{eq:spKB-ID}, together with the single $s_1 =
1$ constraint $\mathcal{P}_1[u;\bar{K},B]|_\Sigma = 0$ from the first
block and none ($s_2 = 0$) from the second.

\emph{Compatibility.} It is obvious from its definition that
$\mathcal{P}[u;0, 0] = 0$.
\end{proof}

\begin{rem}
Much of the geometric structure introduced earlier in this appendix
degenerates in this example: all the diagonal blocks share the same
single-sheeted characteristic covector variety, the null cone of
$g^{ab}$, so that the causal cone structure
$\bar{\mathcal{C}}^{+0}(\mathcal{P}^0)$ and its Cauchy temporal
functions are just the usual ones of $(M,g_{ab})$. The extra
generality of Appendix~\ref{app:adn} is spent entirely on the mixed
orders, i.e.,\ on the weights. However, this level of generality is
difficult to avoid if one wants to refer to the theory as it is
presented in~\cite{gv-mixed}.
\end{rem}

\bibliographystyle{utphys-alpha}
\bibliography{kid}

\end{document}